\newtheorem{theorem}{Theorem}
\newcommand{\R}{\mathbb{R}}
\newcommand{\de}{\partial}
\newcommand{\diff}{\mathrm{d}}
\newcommand{\dx}{\diff x}
\newcommand{\dt}{\diff t}
\newcommand{\dz}{\diff z}
\newcommand{\ds}{\diff s}
\newcommand{\Mean}[1]{\langle #1 \rangle}
\numberwithin{equation}{section}
\title{\textbf{Geometric Flow Equations for Schwarzschild-AdS Space-time and Hawking-Page Phase Transition}}
\author[a]{Davide De Biasio}
\author[a,b]{Dieter L\"ust}
\affil[a]{Arnold Sommerfeld Center for Theoretical Physics, \newline
Ludwig Maximilians Universit\"at M\"unchen, \newline Theresienstrasse 37, 80333 M\"unchen, Germany}
\affil[b]{Max--Planck--Institut f\"ur Physik, Werner--Heisenberg--Institut, \newline
F\"ohringer Ring 6, 80805 M\"unchen, Germany}
\begin{document}

\fancypagestyle{plain}{%
	\fancyhead[R]{LMU-ASC 19/20 \\
MPP-2020-80}
	\renewcommand{\headrulewidth}{0pt}
}
\maketitle

\abstract{Following
the recent observation that the Ricci flow and the infinite  distance swampland conjecture are closely related to each other, 
we will investigate in this paper geometric flow equations for AdS space-time geometries. First, 
we consider the so called Yamabe and  Ricci-Bourguignon flows and we show that these two flows - in contrast to the Ricci flow - 
lead to infinite distance fixed points for product spaces like $AdS_d\times S^p$, where $AdS_d$ denotes d-dimensional AdS space and $S^p$ corresponds to a p-dimensional sphere.
Second, we consider black hole geometries in AdS space time geometries and their behaviour under the Yamabe and  Ricci-Bourguignon flows.
Specifically we will examine if and how the AdS black holes will undergo a Hawking-Page phase transition under the Ricci flow, the
Yamabe flow and under the general Ricci-Bourguignon flow.}

\newpage
\tableofcontents
\newpage

\section{Introduction}
Geometric flow equations are of great interest in both mathematics as well in physics.
Mathematical flow equations in general relativity are certain differential equations, where one follows the flow of a family of metrics with respect to a certain path in field space. 
The most famous example is the Ricci flow, introduced by Hamilton  \cite{hamilton1982},
and subsequently investigated, among others, by Perelman \cite{Perelman:2006un}.
In  cosmology the Ricci-flow also plays an important role,
and furthermore Ricci flow has a close connection to renormalization group  flow in 2-dimensional string $\sigma$ model.\footnote{See e.g. \cite{Bakas:2005kv,Bakas:2007qm}  for a more detailed discussion about the 
relation between the geometric Ricci flow and the 2d RG flow.}

Recently a close connection between the 
swampland distance conjecture \cite{Ooguri:2006in} and geometric flow equations was observed \cite{Kehagias:2019akr}. (For a comprehensive review of the Swampland program, see \cite{Palti:2019pca}).
The  swampland  distance conjecture   states that at larges distances $\Delta$ in the field space of quantum gravity theories there must be an infinite tower of states with mass scale $m$ such that 
\begin{equation}
m = M_p e^{-\Delta } \,.
\label{dsc}
\end{equation}
($M_p=1/L_p$ is the Planck mass, and $L_p$ denotes the Planck length.)
The appearance of the massless tower of states in general implies that the effective field theory description breaks down above the mass scale $m$.
It was then argued \cite{Kehagias:2019akr} that  following the Ricci flow towards a fixed point, which is at infinite distance in the background space,
is accompanied by an infinite tower of states in quantum gravity.
Of course in concrete set-ups like in string theory, one has to ask  what is the origin of the massless tower states in terms of the microscopic theory.

 A second important point of the Ricci flow swampland conjecture 
 is that the entropy functionals of the gradient flow equations provide a sensible definition for the distance 
and hence for corresponding masses of the tower
of states along the flow.
For the case of the Ricci flow, the relevant distance can be defined in terms of the  scalar curvature of the background metric: $\Delta_R\simeq\log R$.

The Ricci flow conjecture was tested by several other examples. In particular
AdS space, where the AdS radius corresponds to the flow parameter of the Ricci flow equations, flows to flat space, which is at infinite distance.
In fact, for AdS spaces the Ricci flow swampland conjecture is equivalent to the anti-de Sitter distance conjecture (ADC)  \cite{Lust:2019zwm}, which  states 
  that the limit of a small AdS cosmological constant, $\Lambda\rightarrow0$,  is at infinite distance in the space of AdS metrics, and that it is related to infinite tower of states
with typical masses that scales as
\begin{equation}
{\rm ADC}:\quad m_{AdS}\sim\Lambda^{\alpha}\, ,\label{adstower}
\end{equation}
with $\alpha={\cal O}(1)$.  The strong version of the ADC proposes that for supersymmetric backgrounds  $\alpha=1/2$.
The corresponding distance is given in terms of the logarithm of $\Lambda$:
\begin{equation}
\Delta_{AdS}=-\alpha\log \Lambda\, .
\end{equation}   
The tower of states that emerge in the context of the ADC are very often given in terms Kaluza-Klein modes, which originate from an extra space time factor in addition to the AdS manifold.
The ADC then also implies that in the limit of small
AdS cosmological constant the AdS space cannot exist on its own in quantum gravity, but must be part of a higher-dimensional background manifold.
   
\vskip0.5cm
In this paper we like to discuss two kinds of generalizations and extensions of the Ricci flow conjecture.
First, in typical string theory constructions or in M-theory examples the total backgrounds are of the form
\begin{eqnarray}
M_d\times K^p\, ,
\end{eqnarray}
where $M_d$ often is given in terms of d-dimensional AdS space $AdS_d$ and $K^p$ corresponds to a p-dimensional sphere $S^p$.
As we will discuss, for these product manifolds the Ricci flow cannot provide a meaningful result, since there is no fixed point of the flow towards flat space.
Therefore one needs a refined version of the Ricci flow. As it turns out and as we will discuss, the socalled 
Yamabe flow (YF) is well suited for product manifolds of the above kind.
In contrast to the Ricci flow, which is driven by the Ricci tensor, the Yamabe flow is determined by the scalar curvature $R$ on the right-hand-side of the differential equation of the space-time metric.
Hence also for product spaces, there can be a non-trivial flow, which, as we will discuss,  extends to flat space at infinite distance.
Therefore one expects that also the endpoint of the Yamabe flow at infinite distance is accompanied by a massless tower of states.
This light tower of states, which typically correspond to the KK modes of $K^p$, arise in the flat limit of the total product space. Therefore  they should not be viewed 
as states that open up a new dimension of space-time, but as states that reconstruct flat (d+p)-dimensional space-time.

Another variant of a geometric flow is the Ricci-Bourguignon flow (RBF), which basically interpolated between the Ricci flow and the Yamabe flow.
We will see that like the Yamabe flow also the Ricci-Bourguignon flow is very useful to analyze product manifolds and will lead to non-trivial results.

\vskip0.5cm
The second topic of this paper is to consider geometric flow equations for black hole space-times. 
Similarly to the ADC and the Ricci conjecture for AdS geometries, it was discussed in \cite{Bonnefoy:2019nzv}
that the limit of large black entropies is at infinite distance in the space of black hole metrics and that this limit is accompanied by 
a tower of massless modes. However the standard Schwarzschild metric in asymptotically flat Minkowski space behaves trivially under
the Ricci flow, since the metric is Ricci flat. Therefore we will analyze the geometric flow equations for black holes in asymptotic AdS space-time geometries.\footnote{Ricci flow for black holes 
geometries was also discussed in \cite{Headrick:2006ti}.}
In fact, black holes in AdS space-time exhibit the 
interesting behaviour of Hawking-Page phase transition \cite{Hawking:1982dh}, which is correlated to the values of the black hole horizon and the AdS radius. 
We will discuss if and how the AdS black holes will undergo a Hawking-Page phase transition under the various geometric flows, namely under Ricci flow, the
Yamabe flow and under the general Ricci-Bourguignon flow. 
We will observe that the Haw\-king-Page phase transition always happens at a finite distance in the flow parameter, implying that there  should not be an infinite tower of particles to get massless  at the transition point, and the effective field theory description is  still valid at the Haw\-king-Page phase transition.

\section{Geometric Flows}
 \subsection{Ricci Flow (RF)}
 Let $\mathcal{M}$ be a smooth differentiable $d$-dimensional manifold, provided with a pseudo-Riemannian metric $g_{\mu\nu}$. Following \cite{hamilton1982}, we introduce \textit{Ricci Flow} equations
\begin{equation}
    \frac{\de g_{\mu\nu}}{\de\lambda}=-2B_{\mu\nu} \, ,
\end{equation}
where $\lambda$ is a \textit{flow parameter} and:
\begin{equation}
    B_{\mu\nu}\equiv R_{\mu\nu} \, .
\end{equation}
For a detailed discussion of Ricci flow, see \cite{Topping2006LecturesOT} and \cite{chow2004ricci}.
\subsubsection{AdS-Sphere product manifold}
A natural way to construct a (pseudo-)Riemannian metric structure on product manifolds is discussed in \ref{producto}.
\subsubsection{$AdS_{d}(\alpha)$ Ricci Flow}
Consider the  smooth manifold $\mathcal{M}\equiv AdS_{d}(\alpha)$, where $AdS_{d}(\alpha)$ is a $d$-dimensional Anti de Sitter spacetime with radius:
\begin{equation}
\alpha\equiv\sqrt{\frac{(d-2)(1-d)}{2\Lambda}} \, .
\end{equation}
Defining \textit{Poincarè coordinates} on half of $AdS_{d}$, the metric on $\mathcal{M}$ takes the form:
\begin{equation}
\ds^{2}=\frac{\alpha^{2}}{z^{2}}\biggl[\dz^{2}-\dt^{2}+\sum_{i=1}^{d-2}\dx_{i}^{2}\biggr] \, .
\end{equation}
It can be easily checked that the following identity is satisfied:
\begin{equation}
R_{\mu\nu}=-\frac{d-1}{\alpha^{2}}g_{\mu\nu}=\frac{2\Lambda}{d-2}g_{\mu\nu} \, .
\end{equation}
Taking $k$ to be the flow parameter, let's promote $\alpha$ to a $k$-dependent parameter $\alpha(k)$. Therefore, Ricci flow equations take the form
\begin{equation}
\frac{\de g_{\mu\nu}(k)}{\de k}=-2R_{\mu\nu}(k)=2\frac{d-1}{\alpha^2(k)}g_{\mu\nu}(k) \, ,
\end{equation}
which can be rephrased, more explicitly, as:
\begin{equation}
\frac{\de}{\de k}\biggl(\frac{\alpha^{2}(k)}{z^{2}}\biggr)=2\frac{d-1}{\alpha^2(k)}\frac{\alpha^{2}(k)}{z^{2}}\ \Longrightarrow\  \boxed{\alpha^{2}(k)=\alpha^{2}(k_{0})+2\bigl(d-1\bigr)\bigl(k-k_{0}\bigr)} \, .
\end{equation}
We now want to perform the same computation in \textit{global coordinates}, where the metric takes the form:
\begin{equation}
\ds^{2}=-\frac{\alpha^{2}+r^{2}}{\alpha^{2}}\dt^{2}+\frac{\alpha^{2}}{\alpha^{2}+r^{2}}\diff r^{2}+r^{2}\biggl[\diff\theta_{1}^{2}+\sum_{a=2}^{d-2}\diff\theta_{a}^{2}\prod_{b=1}^{a-1}\sin^{2}(\theta_{b})\biggr] \, .
\end{equation}
That said, we perform a further coordinate transformation and rephrase the metric as
\begin{equation}
d s^2=\alpha^2(-\cosh^2 \rho \, d \tau^2 + \, d \rho^2) + \alpha^{2}\sinh^2{\rho}\biggl[\diff\theta_{1}^{2}+\sum_{a=2}^{d-2}\diff\theta_{a}^{2}\prod_{b=1}^{a-1}\sin^{2}(\theta_{b})\biggr] \, ,
\end{equation}
where $r\equiv\alpha\sinh{\rho}$ and $t\equiv\alpha\tau$. Now, Ricci flow equations take the form
\begin{equation}
\frac{\de\alpha^{2}(k)}{\de k}=2\frac{d-1}{\alpha^2(k)}\alpha^{2}(k)
\end{equation}
for any metric component, since the only $k$-dependent parameter (i.e. $\alpha$) is \textit{factored out}, while in the previous expression it was non-trivially embedded in different ways into the various metric components. Hence, by taking care of this aspect, we derive the \textit{same}, exact result we obtained in Poincarè coordinates, namely:
\begin{equation}
\boxed{\alpha^{2}(k)=\alpha^{2}(k_{0})+2\bigl(d-1\bigr)\bigl(k-k_{0}\bigr)} \, .
\end{equation}
Therefore, Ricci flow equations can be consistenly derived either in global or Poincarè coordinates.

\subsubsection{$S^{p}(\rho)$ Ricci Flow}
Consider the smooth manifold $\mathcal{M}\equiv S^{p}(\rho)$, where $S^{p}(\rho)$ is a $p$-dimensional sphere with radius $\rho$.
Parametrizing $S^{p}$ with $p$ angles, the metric on $\mathcal{M}$ takes the form:
\begin{equation}
\ds^{2}=\rho^{2}\biggl[\diff\theta_{1}^{2}+\sum_{a=2}^{p}\diff\theta_{a}^{2}\prod_{b=1}^{a-1}\sin^{2}(\theta_{b})\biggr] \, .
\end{equation}
It can be easily checked that the following identity is satisfied:
\begin{equation}
R_{ab}=\frac{p-1}{\rho^2}\tilde{g}_{ab} \, .
\end{equation}
Let's now promote the radius $\rho$ to a $k$-dependent parameter $\rho(k)$. Therefore, Ricci flow equations take the form
\begin{equation}
\frac{\de\tilde{g}_{ab}(k)}{\de k}=-2\tilde{R}_{ab}(k)=-2\frac{p-1}{\rho^2(k)}\tilde{g}_{ab}(k) \, ,
\end{equation}
which can be rephrased, more explicitly, as:
\begin{equation}
\frac{\de\rho^{2}(k)}{\de k}=-2\frac{p-1}{\rho^2(k)}\rho^{2}(k) \ \Longrightarrow\  \boxed{\rho^{2}(k)=\rho^{2}(k_{0})-2\bigl(p-1\bigr)\bigl(k-k_{0}\bigr)} \, .
\end{equation}
\subsubsection{$AdS_{d}(\alpha)\times S^{p}(\rho)$ Ricci Flow}
Given the above considerations on Ricci flow for product manifolds, the flow of $AdS_{d}(\alpha)\times S^{p}(\rho)$ is simply described by:
\begin{equation}
\systeme*{\rho^{2}(k)=\rho^{2}(k_{0})-2\bigl(p-1\bigr)\bigl(k-k_{0}\bigr),\alpha^{2}(k)=\alpha^{2}(k_{0})+2\bigl(d-1\bigr)\bigl(k-k_{0}\bigr)} \, .
\end{equation}
Whether we flow towards increasing or decreasing values of $k$, we can't avoid encountering a singularity at a finite distance in flow parameter. Namely, we have $k_{1}<k<k_{2}$ such that:
\begin{equation}
\systeme*{\rho^{2}(k_{2})=0\longrightarrow k_{2}=k_{0}+\rho^{2}(k_{0})/\bigl(2p-2\bigr),\alpha^{2}(k_{1})=0\longrightarrow k_{1}=k_{0}-\alpha^{2}(k_{0})/\bigl(2d-2\bigr)} \, .
\end{equation}
Therefore, the parameters $\alpha$ and $\rho$ are forced to take values in finite ranges given by:
\begin{equation}
0<\alpha^{2}<\alpha^{2}(k_{2})=\alpha^{2}(k_{0})+\rho^{2}(k_{0})\frac{d-1}{p-1} \, ,
\end{equation}
\begin{equation}
0<\rho^{2}<\rho^{2}(k_{1})=\rho^{2}(k_{0})+\alpha^{2}(k_{0})\frac{p-1}{d-1} \, .
\end{equation}
\newpage
\newpage
\subsection{Yamabe Flow (YF)}\label{yamabe}
Let $\mathcal{M}$ be a smooth differentiable $d$-dimensional manifold, provided with a pseudo-Riemannian metric $g_{\mu\nu}$. We introduce \textit{Yamabe Flow} equations
\begin{equation}
    \frac{\de g_{\mu\nu}}{\de\lambda}=-2B_{\mu\nu} \, ,
\end{equation}
where $\lambda$ is a \textit{flow parameter} and:
\begin{equation}
    B_{\mu\nu}\equiv \frac{R}{d}g_{\mu\nu} \, .
\end{equation}
Information on the convergence properties of Yamabe flow can be found in \cite{2007InMat.170..541B}, \cite{2017arXiv170903192C} and \cite{Brendle2005ConvergenceOT}.

\subsubsection{AdS-Sphere product manifold}
Let's consider\footnote{We took the \textit{universal cover} of $AdS_{d}$ in order to get rid of closed timelike curves.} $AdS_{d}(\alpha)\times S^{p}(\rho)$ spacetime, with metric
\begin{equation}\label{metric}
\begin{split}
d s^2=&\alpha^2\left(-\cosh^2 \varrho \, d \tau^2 + \, d \varrho^2\right) + \alpha^{2}\sinh^2{\varrho}\left[\diff\theta_{1}^{2}+\sum_{a=2}^{d-2}\diff\theta_{a}^{2}\prod_{b=1}^{a-1}\sin^{2}(\theta_{b})\right]+\\
&+\rho^{2}\left[\diff\phi_{1}^{2}+\sum_{c=2}^{p}\diff\phi_{c}^{2}\prod_{d=1}^{c-1}\sin^{2}(\phi_{d})\right] \, ,
\end{split}
\end{equation}
where $\phi_{1},\dots,\phi_{p-1},\theta_{1},\dots,\theta_{d-3}\in[0,\pi]$, $\phi_{p},\theta_{d-2}\in[0,2\pi)$, $\tau\in\R$ and $\varrho\in[0,\infty)$.
\subsubsection{$AdS_{d}(\alpha)$ Yamabe Flow}
For $AdS_{d}(\alpha)$ spacetime, we have that $B_{\mu\nu}=R_{\mu\nu}$, therefore Yamabe Flow coincides with Ricci flow.
\subsubsection{$S^{p}(\rho)$ Yamabe Flow}
For $S^{p}(\rho)$ space, we have that $B_{\mu\nu}=R_{\mu\nu}$, therefore Yamabe Flow coincides with Ricci flow.
\subsubsection{$AdS_{d}(\alpha)\times S^{p}(\rho)$ Yamabe Flow}
If we consider $AdS_{d}(\alpha)\times S^{p}(\rho)$ spacetime, we have
\begin{equation}
    B_{\mu\nu}=\left[\frac{p(p-1)}{(d+p)\rho^{2}}-\frac{d(d-1)}{(d+p)\alpha^{2}}\right]g_{\mu\nu}
\end{equation}
and, therefore, Yamabe Flow equations take the form
\begin{equation}
    \frac{\de g_{\mu\nu}}{\de\lambda}=-2\left[\frac{p(p-1)}{(d+p)\rho^{2}}-\frac{d(d-1)}{(d+p)\alpha^{2}}\right]g_{\mu\nu} \, ,
\end{equation}
where $g_{\mu\nu}$, $\alpha$ and $\rho$ acquire a $\lambda$-dependence. Plugging in the explicit form of the metric, we have:
\begin{equation}
    \begin{split}
      \frac{\de \alpha^{2}}{\de\lambda}=&+2\frac{d(d-1)}{(d+p)} -2\frac{p(p-1)}{(d+p)}\frac{\alpha^{2}}{\rho^{2}}\equiv A(d,p)+B(d,p)\frac{\alpha^{2}}{\rho^{2}} \, , \\
      \frac{\de \rho^{2}}{\de\lambda}=&-2\frac{p(p-1)}{(d+p)}+2\frac{d(d-1)}{(d+p)}\frac{\rho^{2}}{\alpha^{2}}\equiv B(d,p)+A(d,p)\frac{\rho^{2}}{\alpha^{2}} \, .
    \end{split}
\end{equation}
First of all, we observe that:
\begin{equation}
    \frac{\de}{\de\lambda}\frac{\alpha^{2}}{\rho^{2}}=\frac{1}{\rho^{4}}\left[\rho^{2}\frac{\de \alpha^{2}}{\de\lambda}-\alpha^{2}\frac{\de \rho^{2}}{\de\lambda}\right]=0 \, .
\end{equation}
Therefore, we have
\begin{equation}
    \alpha^{2}(\lambda)=C_{0}\rho^{2}(\lambda) \, ,
\end{equation}
where $C_{0}$ is a real constant. Imposing this identity to the above equations, we can \textit{solve} them explicitly as:
\begin{equation}
    \begin{split}
        \alpha^{2}(\lambda)=&\alpha^{2}(\lambda_{0})+\left[A(d,p)+B(d,p)\frac{\alpha^{2}(\lambda_{0})}{\rho^{2}(\lambda_{0})}\right]\left(\lambda-\lambda_{0}\right) \, , \\
        \rho^{2}(\lambda)=&\rho^{2}(\lambda_{0})+\left[B(d,p)+A(d,p)\frac{\rho^{2}(\lambda_{0})}{\alpha^{2}(\lambda_{0})}\right]\left(\lambda-\lambda_{0}\right) \, .
    \end{split}
\end{equation}
With a lighter notation, imposing $\lambda_{0}=0$, we obtain:
\begin{equation}
        \begin{split}
        \alpha^{2}(\lambda)=&C_{0}\rho^{2}_{0}+\left[A(d,p)+B(d,p)C_{0}\right]\lambda \, , \\
        \rho^{2}(\lambda)=&\rho^{2}_{0}+\left[B(d,p)+A(d,p)/C_{0}\right]\lambda \, .
    \end{split}
\end{equation}
Given the values of $d$ and $p$, we have that:
\begin{itemize}
    \item $\alpha^{2}$ and $\rho^{2}$ \textit{grow} along the flow for $C_{0}<K\equiv\frac{d(d-1)}{p(p-1)}$;
    \item $\alpha^{2}$ and $\rho^{2}$ \textit{stay fixed} along the flow for $C_{0}=K$;
    \item $\alpha^{2}$ and $\rho^{2}$ \textit{decrease} along the flow for $C_{0}>K$.
\end{itemize}
In this product manifold example, the two radii {always} have the same flow behaviour.
Then we observe that for $C_{0}<K$\footnote{Namely, for large enough sphere radius.}, the flow has an {infinite distance fixed point}, where the $AdS$ term tends to Minkowski spacetime and the \textit{internal} sphere radius tends to infinity. Therefore, potential \textit{Kaluza-Klein} modes get {light}, as we expected from Swampland conjectures. 
This light tower of states corresponds to the KK modes of the sphere and  arises in the flat limit of the total product space. Therefore  they should not be viewed 
as states that open up a new dimension of space-time, but as states that reconstruct flat (d+p)-dimensional space-time.

\subsubsection{General flow behaviour}
In order to investigate general properties of the flow, we observe that
\begin{equation}
    \frac{\de g^{\mu\nu}}{\de\lambda}=2\frac{R}{d}g^{\mu\nu} 
\end{equation}
and deduce the following equation:
\begin{equation}
    \frac{\de R}{\de\lambda}=R_{\mu\nu}\frac{\de g^{\mu\nu}}{\de\lambda}+\nabla_{\rho}\left(g^{\sigma\nu}\frac{\de\Gamma^{\rho}_{\nu\sigma}}{\de\lambda}-g^{\sigma\rho}\frac{\de\Gamma^{\mu}_{\mu\sigma}}{\de\lambda}\right) \, .
\end{equation}
We separately compute the $\lambda$-derivative of Christoffel's symbols as:
\begin{equation}
    \begin{split}
        \frac{\de\Gamma^{\rho}_{\nu\sigma}}{\de\lambda}=&\frac{1}{2}\frac{\de}{\de\lambda}\biggl[g^{\rho\alpha}\left(\de_{\nu}g_{\alpha\sigma}+\de_{\sigma}g_{\alpha\nu}-\de_{\alpha}g_{\nu\sigma}\right)\biggr]=\\
        =&\frac{1}{2}\biggl[2\frac{R}{d}g^{\rho\alpha}\left(\de_{\nu}g_{\alpha\sigma}+\de_{\sigma}g_{\alpha\nu}-\de_{\alpha}g_{\nu\sigma}\right)\biggr]+\\
        &+\frac{1}{2}g^{\rho\alpha}\biggl(\de_{\nu}\frac{\de g_{\alpha\sigma}}{\de\lambda}+\de_{\sigma}\frac{\de g_{\alpha\nu}}{\de\lambda}-\de_{\alpha}\frac{\de g_{\nu\sigma}}{\de\lambda}\biggr)=\\
        =&-\frac{1}{d}\biggl(\delta^{\rho}_{\ \sigma}\de_{\nu}R+ \delta^{\rho}_{\ \nu}\de_{\sigma}R-g_{\nu\sigma}g^{\rho\alpha}\de_{\alpha}R\biggr) \, .
    \end{split}
\end{equation}
Therefore, we have:
\begin{equation}
    \begin{split}
        g^{\sigma\nu}\frac{\de\Gamma^{\rho}_{\nu\sigma}}{\de\lambda}-g^{\sigma\rho}\frac{\de\Gamma^{\mu}_{\mu\sigma}}{\de\lambda}=&
        g^{\sigma\rho}\frac{1}{d}\biggl(\delta^{\mu}_{\ \sigma}\de_{\mu}R+ \delta^{\mu}_{\ \mu}\de_{\sigma}R-g_{\mu\sigma}g^{\mu\alpha}\de_{\alpha}R\biggr)+\\
        &-g^{\sigma\nu}\frac{1}{d}\biggl(\delta^{\rho}_{\ \sigma}\de_{\nu}R+ \delta^{\rho}_{\ \nu}\de_{\sigma}R-g_{\nu\sigma}g^{\rho\alpha}\de_{\alpha}R\biggr)=\\
        =&\frac{1}{d}\biggl(g^{\mu\rho}\de_{\mu}R+d g^{\sigma\rho}\de_{\sigma}R-g^{\alpha\rho}\de_{\alpha}R\biggr)+\\
        &-\frac{1}{d}\biggl(g^{\rho\nu}\de_{\nu}R+g^{\sigma\rho} \de_{\sigma}R-dg^{\rho\alpha}\de_{\alpha}R\biggr)=\\
        =&2\frac{d-1}{d}g^{\rho\mu}\de_{\mu}R \, .
    \end{split}
\end{equation}
Being $R$ a scalar, we have: 
\begin{equation}
    \de_{\mu}R=\nabla_{\mu}R \, .
\end{equation}
Hence, by defining $\Box\equiv\nabla^{\mu}\nabla_{\mu}$, our flow equation for $R$ becomes:
\begin{equation}
\frac{\de R}{\de\lambda}=\frac{2}{d}R^{2}+2\frac{d-1}{d}\Box R \, .
\end{equation}
By defining
\begin{equation}
    K\equiv\frac{R}{d-1}
\end{equation}
and
\begin{equation}
    \xi\equiv\frac{2(d-1)^{2}}{d}\lambda \, ,
\end{equation}
we obtain:
\begin{equation}\label{scalarflow}
\boxed{\frac{\de K}{\de\xi}=K^{2}+\Box K} \, .
\end{equation}
\subsubsection{Constant Ricci scalar example}
Considering the case of a manifold with constant $R$, like $AdS_{d}(\alpha)\times S^{p}(\rho)$, our flow equation \ref{scalarflow} reduces to
\begin{equation}
    \frac{\de K}{\de\xi}=K^{2}
\end{equation}
and can be solved by
\begin{equation}
    K(\xi)=\frac{K_{0}}{1-K_{0}\xi} \, ,
\end{equation}
where we have chosen, without loss of generality, $\xi_{0}=0$ and defined $K(\xi_{0})\equiv K_{0}$. Introducing $R$ and $\lambda$ again, we have:
\begin{equation}
    R(\lambda)=\frac{R_{0}}{1-2R_{0}\frac{d-1}{d}\lambda} \, .
\end{equation}
By redefining the flow parameter as $\mu\equiv 2(d-1)\lambda/d$, we have:
\begin{equation}
        R(\mu)=\frac{R_{0}}{1-R_{0}\mu} \, .
\end{equation}
Using the \textit{scalar curvature} definition of the distance along the flow, as introduced in \cite{Kehagias:2019akr}, we have:
\begin{equation}
    \Delta(\mu)\propto\log\left(\frac{R_{0}}{R(\mu)}\right)=\log\left(1-R_{0}\mu\right) \, .
\end{equation}
Using, on the other hand, the standard definition of a path length on the metrics manifold, extensively discussed in \cite{1992math......1259G}, we obtain the same exact result
\begin{equation}
\begin{split}
    \tilde{\Delta}(\mu)\propto&\int_{0}^{\mu}\diff\bar{\mu}\left(\frac{1}{V_{\mathcal{M}}}\int_{\mathcal{M}}\sqrt{g}g^{MN}g^{OP}\frac{\de g_{MO}}{\de\bar{\mu}}\frac{\de g_{NP}}{\de\bar{\mu}}\right)^{\frac{1}{2}}=\\
    =&\int_{0}^{\mu}\diff\bar{\mu}\left(\frac{1}{V_{\mathcal{M}}}\int_{\mathcal{M}}\sqrt{g}g^{MN}g^{OP}\frac{\de g_{MO}}{\de\bar{\lambda}}\frac{\de g_{NP}}{\de\bar{\lambda}}\right)^{\frac{1}{2}}\frac{\de\bar{\lambda}}{\de\bar{\mu}}=\\
    \propto&\int_{0}^{\mu}\diff\bar{\mu}|R(\bar{\mu})|=\pm\log\left(1-R_{0}\mu\right) \, ,
\end{split}
\end{equation}
where "+" corresponds to $R_{0}<0$ and "-" corresponds to $R_{0}>0$. 
Therefore, depending on the choice of $R_{0}$, we have two possible non-trivial flow behaviours:
\begin{itemize}
    \item $R_{0}>0$: The flow encounters a \textbf{singularity}, since $R\rightarrow\infty$ as $\mu\rightarrow 1/R_{0}$, with $\Delta(\mu=1/R_{0}|R_{0}>0)=\infty$.
    \item $R_{0}<0$: The flow tends to an \textbf{infinite distance fixed point}, since $R\rightarrow 0$ as $\mu\rightarrow\infty$, with $\Delta(\mu=\infty|R_{0}<0)=\infty$.
\end{itemize}
The discussion is trivially reversed for $\mu$ decreasing towards negative values. Therefore, both the case in which $R_{0}>0$ and the one in which $R_{0}<0$ have an \textit{infinite distance fixed point} and a \textit{singularity}, which is at an infinite distance on the manifold of all metrics despite being at a finite distance in the flow parameter. As discussed in \cite{Kehagias:2019akr}
the presence of these infinite distance flow fixed-points is of relevance in probing the swampland conjectures.

\newpage
\subsection{Ricci-Bourguignon Flow (RBF)}
Let $\mathcal{M}$ be a smooth differentiable $d$-dimensional manifold, provided with a pseudo-Riemannian metric $g_{\mu\nu}$. We introduce \textit{Ricci-Bourguignon flow} equations
\begin{equation}\label{ricicib}
    \frac{\de g_{\mu\nu}}{\de\lambda}=-2C_{\mu\nu}(\gamma) \, ,
\end{equation}
as defined in \cite{bourguignon1981ricci}, where $\lambda$ is a \textit{flow parameter}, $\gamma\in\R$ and:
\begin{equation}
    C_{\mu\nu}(\gamma)\equiv R_{\mu\nu}-\gamma Rg_{\mu\nu} \, .
\end{equation}
For a detailed review of the subject, see \cite{article}.\\
First of all, we see that we are dealing with a family of geometric flows, labelled by $\gamma$, interpolating between \textit{Ricci flow} and \textit{Yamabe flow}. In particular, it should be observed that there are some \textit{special values} of $\rho$ for which $C_{\mu\nu}$ takes interesting forms:
\begin{itemize}
    \item $\gamma=0$: $C_{\mu\nu}$ is the \textit{Ricci tensor} $R_{\mu\nu}$;
    \item $\gamma=1/2(d-1)$: $C_{\mu\nu}$ is the \textit{Schouten tensor} $R_{\mu\nu}-Rg_{\mu\nu}/2(d-1)$;
    \item $\gamma=1/d$: $C_{\mu\nu}$ is the \textit{traceless Ricci tensor} $R_{\mu\nu}-Rg_{\mu\nu}/d$;
    \item $\gamma=1/2$: $C_{\mu\nu}$ is the \textit{Einstein tensor} $R_{\mu\nu}-Rg_{\mu\nu}/2$;
    \item $\gamma\rightarrow-\infty$: $C_{\mu\nu}$ is, after a suitable rescaling, the \textit{Yamabe tensor} $-Rg_{\mu\nu}/d$.
\end{itemize}
The \textit{scalar curvature} flow behaviour can be derived as:
\begin{equation}\label{scalar}
    \boxed{\frac{\de R}{\de\lambda}=-2\gamma R^{2}+2R_{\mu\nu}R^{\mu\nu}+\left(1-2(d-1)\gamma\right)\Box R} \, .
\end{equation}
\subsubsection{Einstein manifold Ricci-Bourguignon Flow}\label{einstein}
Consider a d-dimensional manifold for which
\begin{equation}
    R_{\mu\nu}=\Omega g_{\mu\nu}
\end{equation}
and, subsequently:
\begin{equation}
    R=\Omega d \, .
\end{equation}
In such a framework, we have
\begin{equation}
    C_{\mu\nu}(\gamma)= \left(1-\gamma d\right)\Omega g_{\mu\nu}
\end{equation}
and the flow equation takes the form
\begin{equation}
    \frac{\de g_{\mu\nu}}{\de\lambda}=-2\left(1-\gamma d\right)\Omega g_{\mu\nu} \, ,
\end{equation}
which can be solved explicitly as:
\begin{equation}
    g_{\mu\nu}(\lambda)=g_{\mu\nu}(\lambda_{0})\exp{\left(-2\left(1-\gamma d\right)\int_{\lambda_{0}}^{\lambda}\diff\xi\Omega(\xi)\right)} \, .
\end{equation}
The equation \ref{scalar} for the flow behaviour of Ricci scalar takes the form:
\begin{equation}
    \frac{\de\Omega}{\de\lambda}=2\left(1-\gamma d\right)\Omega^{2} 
\end{equation}
and can be solved as:
\begin{equation}
    \boxed{\Omega(\lambda)=\frac{\Omega_{0}}{1+2\left(\gamma d-1\right)\Omega_{0}(\lambda-\lambda_{0})}} \, .
\end{equation}
Therefore, the metric behaviour for \textit{Einstein manifolds} under \textit{Ricci-Bourguignon} flow is
\begin{equation}
    \boxed{g_{\mu\nu}(\lambda)=g_{\mu\nu}(\lambda_{0})\left[1+2\left(\gamma d-1\right)\Omega_{0}(\lambda-\lambda_{0})\right]} \, ,
\end{equation}
which is nothing more than a \textit{linear Weyl rescaling}. From now on, without loss of generality, we'll assume $\lambda_{0}=0$.
Except for the trivial case in which $\gamma=1/d$, flow fixed points correspond to $\Omega=0$. Therefore, starting from a generic $\Omega_{0}$, we look for a flow parameter value $\bar{\lambda}$ such that $\Omega(\bar{\lambda})=0$. By defining $\varepsilon\equiv 2\left(\gamma d-1\right)\Omega_{0}$, we describe two distinct scenarios:
\begin{itemize}
    \item $\varepsilon>0$: the flow has a \textbf{fixed point} for $\lambda\rightarrow +\infty$ and a \textbf{singularity} for $\lambda\rightarrow -1/\varepsilon$;
    \item $\varepsilon<0$: the flow has a \textbf{fixed point} for $\lambda\rightarrow -\infty$ and a \textbf{singularity} for $\lambda\rightarrow -1/\varepsilon$.
\end{itemize}
As discussed in section \ref{yamabe}, distance along the flow can be computed as:
\begin{equation}
\begin{split}
    \Delta(\lambda)\propto&\log\left(\frac{R_{0}}{R(\lambda)}\right)=\log\left(\frac{\Omega_{0}}{\Omega(\lambda)}\right)=\\
    =&\log\left(1+2\left(\gamma d-1\right)\Omega_{0}\lambda\right)=\\
    =&\log\left(1+\varepsilon\lambda\right) \, .
    \end{split}
\end{equation}
Hence, the \textbf{fixed point} and the \textbf{singular point}, for both $\varepsilon>0$ and $\varepsilon<0$, are at an \textbf{infinite distance} on the manifold of all metrics, despite the latter being at a finite distance in the flow parameter.
\subsubsection{$AdS_{d}(\alpha)\times S^{p}(\rho)$ Ricci-Bourguignon Flow}
Let's consider, once more, the $AdS_{d}(\alpha)\times S^{p}(\rho)$ spacetime manifold, endowed with the natural product manifold metric discussed in \ref{producto}. Therefore, we have:
\begin{equation}
    C_{MN}(\gamma)=R_{MN}-\gamma\left[\frac{p(p-1)}{\rho^{2}}-\frac{d(d-1)}{\alpha^{2}}\right]g_{MN} \, .
\end{equation}
Considering the metric expression \ref{metric} and following the usual steps, our flow equations for $\alpha$ and $\rho$ are
\begin{equation}
\begin{split}
    \frac{\de\alpha^{2}}{\de\lambda}=&2(d-1)+2\gamma\left[p(p-1)\frac{\alpha^{2}}{\rho^{2}}-d(d-1)\right]=\\
    =&2(d-1)[1-\gamma d]+2\gamma p(p-1)\frac{\alpha^{2}}{\rho^{2}}
\end{split}
\end{equation}
and:
\begin{equation}
\begin{split}
    \frac{\de\rho^{2}}{\de\lambda}=&-2(p-1)+2\gamma\left[p(p-1)-d(d-1)\frac{\rho^{2}}{\alpha^{2}}\right]=\\
    =&-2(p-1)[1-\gamma p]-2\gamma d(d-1)\frac{\rho^{2}}{\alpha^{2}} \, .
    \end{split}
\end{equation}
We define $m\equiv 2(d-1)[1-\gamma d]$, $n\equiv 2\gamma p(p-1)$, $k\equiv -2(p-1)[1-\gamma p]$, $q\equiv -2\gamma d(d-1)$, $f(\lambda)\equiv\alpha^{2}(\lambda)/m$ and $g(\lambda)\equiv\rho^{2}(\lambda)/k$, obtaining:
\begin{equation}
\begin{split}
    f'(\lambda)=&1+\frac{n}{k}\frac{f(\lambda)}{g(\lambda)}\equiv 1+A\frac{f(\lambda)}{g(\lambda)} \, , \\
    g'(\lambda)=&1+\frac{q}{m}\frac{g(\lambda)}{f(\lambda)}\equiv 1+B\frac{g(\lambda)}{f(\lambda)} \, .
\end{split}
\end{equation}
In spite of their harmless appearance, the above equations are rather complicated to be solved exactly. Therefore, we provide some examples.\\
\subsubsection{Specific examples}
First, we impose to work in the typical string theory \textit{compactification} example with $d=4$ and $p=6$. Hence, we have:
\begin{equation}
    A=-\frac{6\gamma}{1-6\gamma};\qquad B=-\frac{4\gamma}{1-4\gamma} \, .
\end{equation}
Now, we impose $\lambda_{0}=0$, $\alpha^{2}(0)=1$, $\rho^{2}(0)=5$ initial conditions\footnote{This way, we are in the initial conditions regime for which Yamabe flow provides a growing behaviour for both AdS and the sphere.} and produce flow behaviour plots for different values of $\gamma$.
\begin{figure}[H] 
  \begin{subfigure}[b]{0.5\linewidth}
    \centering
        \includegraphics[width=\linewidth]{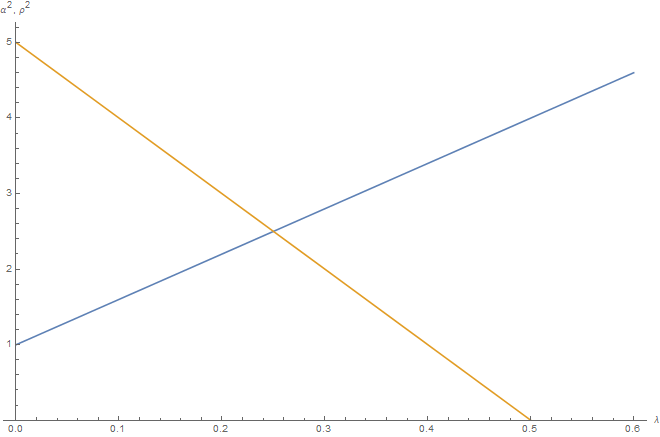}
    \caption{$\gamma=0$}
    \vspace{4ex}
  \end{subfigure}
  \begin{subfigure}[b]{0.5\linewidth}
    \centering
    \includegraphics[width=\linewidth]{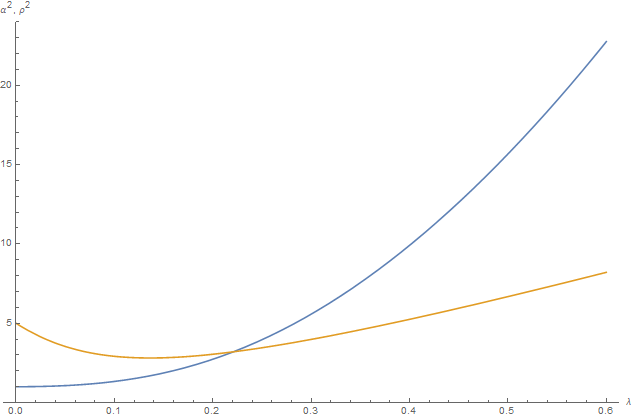}
    \caption{$\gamma=0.5$}
    \vspace{4ex}
  \end{subfigure}
      \begin{subfigure}[b]{0.5\linewidth}
    \centering
        \includegraphics[width=\linewidth]{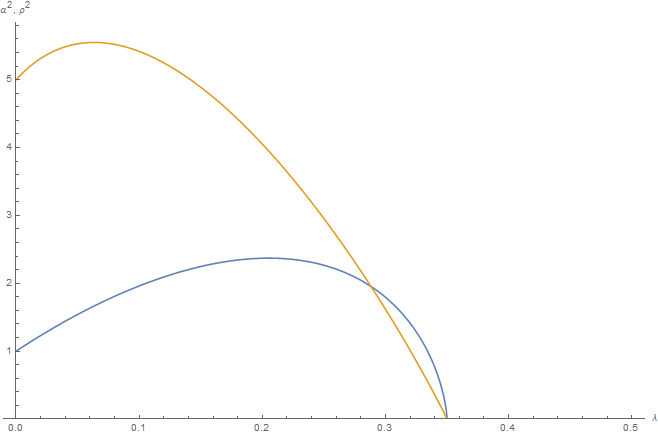}
    \caption{$\gamma=-0.5$}
    \vspace{4ex}
  \end{subfigure}
  \begin{subfigure}[b]{0.5\linewidth}
    \centering
    \includegraphics[width=\linewidth]{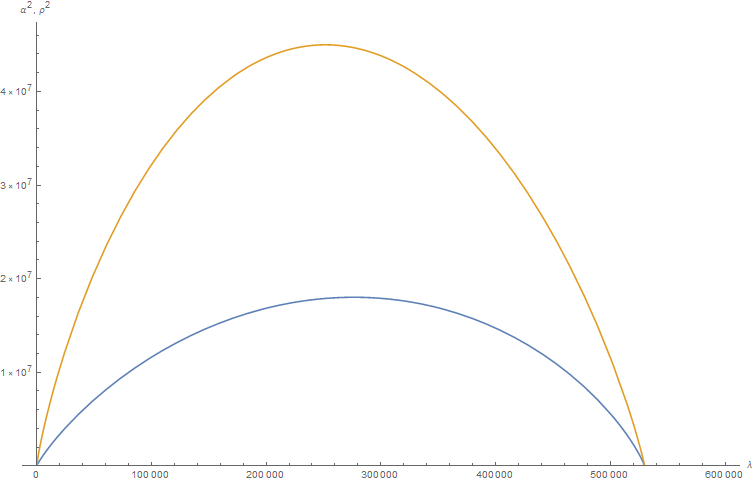}
    \caption{$\gamma=-30$}
    \vspace{4ex}
  \end{subfigure} 
\end{figure}
In the above graphs, the orange plot corresponds to $\rho^{2}$ and the blue one to $\alpha^2$.

\section{Schwarzschild-AdS black hole}\label{sadschap}
The following discussion serves as a fast review of some significant thermodynamical properties of Schwarzschild-AdS black hole spacetime. For a standard reference on thermodynamical properties of black holes, see \cite{Hawking:1976de}. For a specific discussion of black holes thermodynamics in AdS spacetime, see \cite{Hawking:1982dh} and \cite{Socolovsky_2018}. Information on the holographic interplay between Hawking-Page phase transition of Schwarzschild-AdS black holes and confinement in thermal gauge theories can be found in \cite{Witten:1998zw}.

\subsection{Schwarzschild-AdS thermodynamics}

Let's consider an asymptotically-$AdS_{4}(\alpha)$, non-rotating and uncharged \textit{black hole} solution to Einstein field equations
\begin{equation}
\ds^{2}=-\biggl(1 - \frac{2M}{r}+ \frac{r^2}{\alpha^{2}} \biggr)\dt^2 + \biggl( 1 - \frac{2M}{r}+ \frac{r^2}{\alpha^{2}} \biggr)^{-1}\diff r^2 + r^2 \diff\Omega^2_{2} \, ,
\end{equation}
where we chose to work in natural units and set $3\alpha^{2}\equiv-\Lambda$.
According to \textit{Generalized Birkhoff's theorem} for static, spherically symmetric vacuum solutions to Einstein field equations with negative cosmological constant, the above metric is unique up to diffeomorphisms. We define
\begin{equation}
\Delta(r;M,\alpha)\equiv 1 - \frac{2M}{r}+ \frac{r^2}{\alpha^{2}}
\end{equation}
and observe that, for $M>0$, such a black hole possesses an outer event horizon at $r_{+}$, defined as the largest root of $\Delta$, such that the metric has a coordinate singularity at $r=r_{+}$. The exact form for $r_{+}$ is given by:
\begin{equation}
r_{+}=\left(M\alpha^{2}\right)^{1/3}\left[\left(1+\sqrt{1+\frac{\alpha^{2}}{27M^{2}}}\right)^{1/3}+\left(1-\sqrt{1+\frac{\alpha^{2}}{27M^{2}}}\right)^{1/3}\right] \, .
\end{equation}
In order to compute \textit{Hawking temperature} for the outer horizon, we perform a Wick rotation $t\mapsto\tau\equiv it$ and move to Euclidean coordinates $(\tau,\rho,\theta,\phi)$, where $\theta$ and $\phi$ are the usual angular coordinates on $S^{2}$ and $\rho$ is defined by:
\begin{equation}
r\equiv r_{+}+\rho^{2} \, .
\end{equation}
Since we are interested in near-horizon behaviour, we can take $\rho^{2}\ll r_{+}$ and keep only leading contributions in $\rho^{2}$. By doing so, the metric takes the approximate form:
\begin{equation}
\ds^{2}\approx\frac{4\alpha^{2}r_{+}}{\alpha^{2}+3r_{+}^{2}}\biggl[\biggl(\frac{\alpha^{2}+3r_{+}^{2}}{2\alpha^{2}r_{+}}\biggr)^{2}\rho^{2}\diff\tau^{2}+\diff\rho^{2}\biggr]+r_{+}^{2}\diff\Omega_{2}^{2} \, .
\end{equation}
In order to avoid \textit{conical singularities} at $r=r_{+}$, we must impose
\begin{equation}
\bar{\tau}\equiv\biggl(\frac{\alpha^{2}+3r_{+}^{2}}{2\alpha^{2}r_{+}}\biggr)\tau
\end{equation}
to be $2\pi$-periodic. Therefore, $\tau$ has to be periodic with period $\beta$, where
\begin{equation}
\beta=\frac{4\pi\alpha^{2}r_{+}}{\alpha^{2}+3r_{+}^{2}} \, ,
\end{equation}
from which we can easily read off Hawking temperature as
\begin{equation}
T_{H}=\beta^{-1}=\frac{\alpha^{2}+3r_{+}^{2}}{4\pi\alpha^{2}r_{+}} \, ,
\end{equation}
where $r_{+}$, as always, has to be intended as a function of $M$ and $\alpha$.
\begin{center}
\begin{figure}[H]
\centering
  \includegraphics[width=0.6\linewidth]{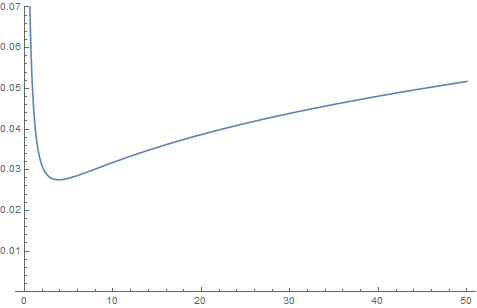}
\caption{Schwarzschild-AdS Hawking temperature as a function of $M$, in which $\alpha$ is chosen to have fixed value $10$.}
\label{fig:test}
\end{figure}
\end{center}
The main feature of {Schwarzschild-AdS black holes}, distinguishing them from asymptotically flat Schwarzschild solutions, is that they can only exist for $T$ bigger or equal to a limit temperature , that can be computed as:
\begin{equation}
    T_{0}=\frac{\sqrt{3}}{2}\frac{1}{\pi\alpha} \, .
\end{equation}
In particular, we have a single possible mass value for $T=T_{0}$ and two different branches for $T>T_{0}$, which we'll refer to as \textit{small} and \textit{large} black hole solutions, respectively. While \textit{small} SAdS black holes are thermodynamically unstable, \textit{large} SAdS black holes have positive heat capacity. Hence, they can be put in a stable thermal equilibrium with a background radiation-filled AdS reservoir.

\subsubsection{Hawking-Page phase transition}
In the following section our aim is to compare thermodynamical quantities characterizing pure radiation-filled AdS and Schwarzschild-AdS spacetime, in order to figure out which portions of the parameter space correspond to the former being energetically favoured and which ones, on the other hand, to the latter. In order to exclude the trivial case in which no black hole can exist, we assume to work with:
\begin{equation}
T>T_{0} \, .
\end{equation}
First of all, let's consider the path integral expression for the Euclidean quantum partition function of our gravity theory
\begin{equation}
Z=\int\mathcal{D}[g]e^{-I[g]} \, ,
\end{equation}
where:
\begin{equation}
I[g]=\frac{1}{16\pi}\int\diff^{4}x\sqrt{-g}\biggl(R[g]-2\Lambda\biggr) \, .
\end{equation}
The main contribution can be singled out in the \textit{saddle point} approximation
\begin{equation}
Z\approx e^{-I[g_{0}]} \, ,
\end{equation}
where $I[g_{0}]$ is nothing more than a global minimum of the action\footnote{Here, we do not consider the presence of additional saddle points introducing non-perturbative corrections.}. Helmoltz free energy, whose minimum corresponds to the energetically-favoured configuration of the system, can be approximated by:
\begin{equation}
F\equiv-T\log Z\approx TI[g_{0}] \, .
\end{equation}
Since we want to compare thermal AdS (tAdS) and Schwarzschild-AdS (SAdS) at the same temperature $T$, we only have to compute their respective actions $I$, in which a non-trivial temperature-dependent behaviour will be shown to appear. Hence, the one with the smallest value of $I(T)$ will be the energetically favoured one. For both AdS and Schwarzschild-AdS, we choose to introduce a radial cut-off in order to make the action integral finite. Therefore, we have
\begin{equation}
I_{1}=\frac{\Lambda}{8\pi}\int_{0}^{\tau_{1}}\dt\int_{0}^{K}r^{2}\diff r\int_{S^{2}}\diff\Omega_{2}^{2}=\frac{\Lambda}{6}\tau_{1}K^{3}
\end{equation}
for thermal AdS and 
\begin{equation}
I_{0}=\frac{\Lambda}{8\pi}\int_{0}^{\tau_{0}}\dt\int_{r_{+}}^{K}r^{2}\diff r\int_{S^{2}}\diff\Omega_{2}^{2}=\frac{\Lambda}{6}\tau_{0}\biggl(K^{3}-r_{+}^{3}\biggr)
\end{equation}
for Schwarzschild-AdS, where the radial integration starts from the outer horizon radius $r_{+}$. This is due to the fact that physics on the outside, which is the one we want to compare, can't be influenced by what happens at $r<r_{+}$. $\tau_{0}$ is fixed by $\alpha$ and $M$, while $\tau_{1}$ can be fixed by imposing time coordinates to have the same thermal periodicity. Thus, we have:
\begin{equation}
\tau_{1}=\tau_{0}\sqrt{\frac{\alpha^{2}K-2M\alpha^{2}+K^{3}}{\alpha^{2}K+K^{3}}} \, .
\end{equation}
For $r_{+}\ll K$, we obtain:
\begin{equation}
\Delta I\equiv I_{0}-I_{1}\approx\frac{\pi r_{+}^{2}\bigl(\alpha^{2}-r_{+}^{2}\bigr)}{\alpha^{2}+3r_{+}^{2}} \, .
\end{equation}
From a naive perspective, $T\Delta I$ is supposed to encode the free energy contribution due to the presence of a black hole in Schwarzschild-AdS. We can check our intuition by applying the usual formulas from statistical mechanics
\begin{equation}
\langle E \rangle=\frac{\de}{\de\beta_{0}}\Delta I=M,\qquad S=\beta_{0}\langle E \rangle-\Delta I=\frac{A_{BH}}{4}
\end{equation}
and deriving the 
Hawking-Page phase transition critical temperature
\begin{equation}
T_{C}=\frac{1}{\pi\alpha}\, .
\end{equation}
Specifically, we discover that:
\begin{enumerate}
\item $\Delta I<0\Leftrightarrow \alpha^{2}<r_{+}^{2}\Leftrightarrow T>T_{C}$ Schwarzschild-AdS is energetically favoured. Thermal AdS can exist, but it can reduce its free energy by tunnelling to Schwarzschild-AdS;
\item $\Delta I>0\Leftrightarrow \alpha^{2}>r_{+}^{2}\Leftrightarrow T<T_{C}$ thermal AdS is energetically favoured. Schwarzschild-AdS can exist, but it can reduce its free energy by tunnelling to thermal AdS;
\item $\Delta I=0\Leftrightarrow \alpha^{2}=r_{+}^{2}\Leftrightarrow T=T_{C}$ we are at a critical point, where Schwarzschild-AdS and thermal AdS are equally favoured.
\end{enumerate}
From the previous discussion, we can observe that Hawking-Page phase  temperature is consistent with the lower boundary on black hole temperature, namely
$
T_{C}>T_{0}
$.
In summary, starting from a temperature $T_{0}<T<T_{C}$ and increasing it, we encounter a \textit{first order phase transition} at $T=T_{C}$, named \textit{Hawking-Page} phase transition, where large Schwarzschild-AdS becomes preferred over thermal AdS.\\

 \newpage
 \newpage
\subsection{Schwarzschild-AdS Ricci-Bourguignon Flow}
In order to compute \textit{Ricci-Bourguignon flow} equations\footnote{In this section, we are going to discuss the general $\gamma$-behaviour of Schwarzschild-AdS metric under Ricci-Bourguignon flow equations. 
By imposing $\gamma=0$, one can get back to the standard Ricci flow equations.} for Schwarzschild-AdS spacetime, it is convenient to move to a coordinate system in which the unphysical divergence at the horizon, $r=r_{+}$, is removed. Therefore, following \cite{Socolovsky_2018}, we start by defining a \textit{tortoise radius}:
\begin{equation}
\begin{split}
\bar{r}\equiv&\int_{0}^{r}\dx\biggl(1 - \frac{2M}{x}+ \frac{x^2}{\alpha^{2}}  \biggr)^{-1}=\\
=&\frac{\alpha^{2}}{3r_{+}^{2}+\alpha^{2}}\biggl[r_{+}\log\biggl|1-\frac{r}{r_{+}}\biggr|-\frac{r_{+}}{2}\log\biggl(1+\frac{r(r+r_{+})}{r_{+}^{2}+\alpha^{2}}\biggr)+\\
&+\frac{3r_{+}^{2}+2\alpha^{2}}{\sqrt{3r_{+}^{2}+4\alpha^{2}}}\arctan\biggl(\frac{r\sqrt{3r_{+}^{2}+4\alpha^{2}}}{2(r_{+}^{2}+\alpha^{2})+rr_{+}}\biggr)\biggr] \, .
\end{split}
\end{equation} 
Furthermore, we introduce \textit{null coordinates}
\begin{equation}
u\equiv t-\bar{r},\qquad v\equiv t+\bar{r}
\end{equation}
and obtain:
\begin{equation}
\ds^{2}=-\biggl( 1 - \frac{2M}{r} + \frac{r^2}{\alpha^{2}} \biggr)\diff u\diff v + r^2 \diff\Omega^2_{2} \, .
\end{equation}
Now, we move to \textit{Kruskal-like} coordinates
\begin{equation}
\begin{split}
U&\equiv -\exp\bigl(-2\pi T_{H}u\bigr)=-\exp\biggl(-\frac{\alpha^{2}+3r_{+}^{2}}{2\alpha^{2}r_{+}}u\biggr)\, , \\
V&\equiv\exp\bigl(2\pi T_{H}v\bigr)=\exp\biggl(\frac{\alpha^{2}+3r_{+}^{2}}{2\alpha^{2}r_{+}}v\biggr)
\end{split}
\end{equation}
and get:
\begin{equation}
\begin{split}
\ds^{2}=&\biggl( 1 - \frac{2M}{r} + \frac{r^2}{\alpha^{2}} \biggr)\frac{1}{4\pi^{2}T_{H}^{2}}\frac{\diff U\diff V}{UV} + r^2 \diff\Omega^2_{2}=\\
=&-\biggl[ 1 - \frac{r_{+}}{r}\biggl(1+\frac{r_{+}^{2}}{\alpha^{2}}\biggr)+ \frac{r^2}{\alpha^{2}} \biggr]\frac{4\alpha^{4}r_{+}^{2}}{(\alpha^{2}+3r_{+}^{2})^{2}}\biggl|1-\frac{r}{r_{+}}\biggr|^{-1}\biggl(1+\frac{r(r+r_{+})}{r_{+}^{2}+\alpha^{2}}\biggr)^{1/2}\\&\exp\biggl[-\frac{3r_{+}^{2}+2\alpha^{2}}{r_{+}\sqrt{3r_{+}^{2}+4\alpha^{2}}}\arctan\biggl(\frac{r\sqrt{3r_{+}^{2}+4\alpha^{2}}}{2(r_{+}^{2}+\alpha^{2})+rr_{+}}\biggr)\biggr]\diff U\diff V + r^2 \diff\Omega^2_{2}=\\
=&-G(r;\alpha,r_{+})\diff U\diff V + r^2 \diff\Omega^2_{2} \, .
\end{split}
\end{equation}
Assuming to work with $r>r_{+}$, we have
\begin{equation}
\begin{split}
G(r;\alpha,r_{+})=&\frac{4\alpha^{2}r_{+}^{3}(\alpha^{2}+r^{2}+r_{+}^{2}+rr_{+})^{3/2}}{r(\alpha^{2}+r_{+}^{2})^{1/2}(\alpha^{2}+3r_{+}^{2})^{2}}\exp\biggl[-\frac{3r_{+}^{2}+2\alpha^{2}}{r_{+}\sqrt{3r_{+}^{2}+4\alpha^{2}}}\\&\arctan\biggl(\frac{r\sqrt{3r_{+}^{2}+4\alpha^{2}}}{2(r_{+}^{2}+\alpha^{2})+rr_{+}}\biggr)\biggr]=\\
\equiv&\rho(r;r_{+},\alpha)\alpha^{2}\exp\biggl[-\mu(r_{+};r_{+},\alpha)^{-1}\arctan\mu(r;r_{+},\alpha)\biggr] \, ,
\end{split}
\end{equation}
where:
\begin{equation}\label{rho}
\rho(r;r_{+},\alpha)\equiv\frac{4r_{+}^{3}(\alpha^{2}+r^{2}+r_{+}^{2}+rr_{+})^{3/2}}{r(\alpha^{2}+r_{+}^{2})^{1/2}(\alpha^{2}+3r_{+}^{2})^{2}} \, ,
\end{equation}
\begin{equation}\label{mu}
\mu(r;r_{+},\alpha)\equiv\frac{r\sqrt{3r_{+}^{2}+4\alpha^{2}}}{2(r_{+}^{2}+\alpha^{2})+rr_{+}} \, .
\end{equation}
From four-dimensional vacuum Einstein's equations with negative cosmological constant $\Lambda$, as defined above, we trivially have:
\begin{equation}
R_{\mu\nu}=-\frac{3}{\alpha^{2}}g_{\mu\nu} \, .
\end{equation}
Hence, we are dealing with an Einstein manifold with
\begin{equation}
    \Omega(\lambda)=\frac{-3}{\alpha^{2}(\lambda)}
\end{equation}
and \textit{Ricci-Bourguignon flow} equations, as discussed in \ref{einstein}, can be solved by
\begin{equation}
G(r;\lambda)=G(r;\lambda_{0})\exp{\left[(1-\gamma d)\int_{\lambda_{0}}^{\lambda}\frac{6}{\alpha^{2}(\lambda')}\diff\lambda'\right]} \, ,
\end{equation}
where:
\begin{equation}
    \alpha^{2}(\lambda)=\alpha^{2}(\lambda_{0})+6(1-\gamma d)(\lambda-\lambda_{0}) \, .
\end{equation}
It is important to stress the fact that, being the flow equation \ref{scalar} for the scalar curvature of Schwarzschild-AdS spacetime formally equivalent to the one for the scalar curvature of pure-AdS, the flow behaviour of $\alpha$ coincides with the one derived for pure-AdS.
Therefore, we have:
\begin{equation}
    \frac{G(r;\lambda)}{\alpha^{2}(\lambda)}=\frac{G(r;\lambda_{0})}{\alpha^{2}(\lambda_{0})} \, .
\end{equation}
By defining $r_{+}\equiv r_{+}(\lambda)$, $\alpha\equiv \alpha(\lambda)$, $\bar{r}_{+}\equiv r_{+}(\lambda_{0})$ and $\bar{\alpha}\equiv \alpha(\lambda_{0})$ and by plugging the expression for $G$ into Ricci flow equations, we obtain:
\begin{equation}\label{rbf}
\boxed{
\log\frac{\rho(r;r_{+},\alpha)}{\rho(r;\bar{r}_{+},\bar{\alpha})}+\frac{\arctan\mu(r;\bar{r}_{+},\bar{\alpha})}{\mu(\bar{r}_{+};\bar{r}_{+},\bar{\alpha})}-\frac{\arctan\mu(r;r_{+},\alpha)}{\mu(r_{+};r_{+},\alpha)}=0} \, .
\end{equation}
Before moving to the explicit discussion of the flow equations, it is crucial to define the \textit{radii ratio} parameter 
\begin{equation}
    k\equiv\frac{r_{+}}{\alpha} \, ,
\end{equation}
which will serve as an \textit{order parameter} for Hawking-Page phase transition:
\begin{itemize}
    \item $k>1$: $T>T_{C}$, the black hole phase is dominant;
    \item $k<1$: $T<T_{C}$, pure-AdS phase is dominant.
\end{itemize}
Therefore, we are interested in studying the flow behaviour of $k$.
\subsubsection{Solution behaviour}
In the following section, we are going to discuss the flow behaviour of $k$, directly descending from that of $r_{+}$, for different initial conditions $\bar{k}$ and different values of $\gamma$. In all the following graphs, the dashed lines correspond to the \textit{Hawking-Page} phase transition occurring at $k=1$. Furthermore, the flow behaviours are plotted at a fixed value of $r$: changing it doesn't seem to alter the qualitative properties of the flow in interesting ways\footnote{For example, the singularity of the $\gamma=0.5$-case doesn't depend on $r$!}, but it can create computational problems near $\lambda=0$.
\newpage
\begin{figure}[H]
\centering
\includegraphics[width=0.6\linewidth]{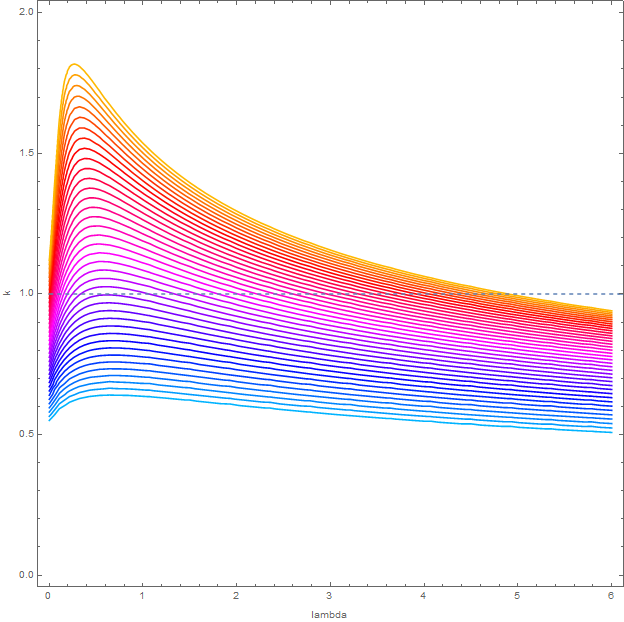}
\caption{$k(\lambda)$ flow behaviour at $r=3.5$, with $\gamma=0$, $\bar{\alpha}=1$ and $\bar{k}\in[0.55,1.12]$}
    \label{one}
    \vspace{4ex}
    \includegraphics[width=0.6\linewidth]{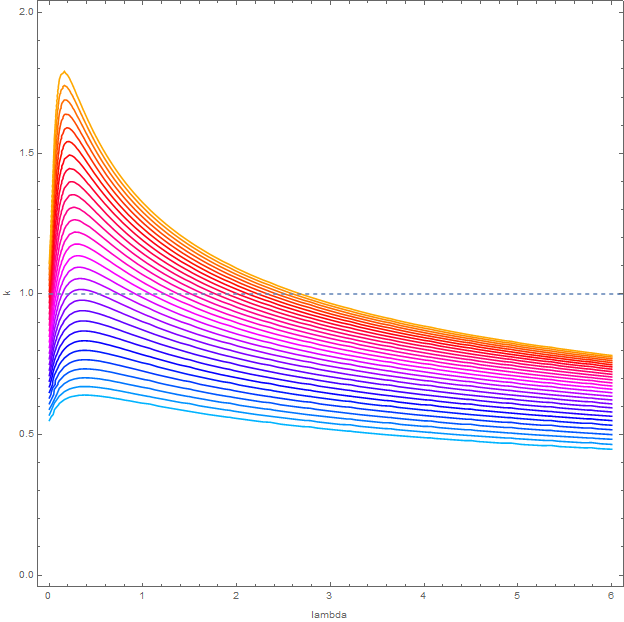}
    \caption{$k(\lambda)$ flow behaviour at $r=3.5$, with $\gamma=-0.2$, $\bar{\alpha}=1$ and $\bar{k}\in[0.55,1.12]$}
    \label{two}
\end{figure}
\begin{figure}[H]
\centering
\includegraphics[width=0.6\linewidth]{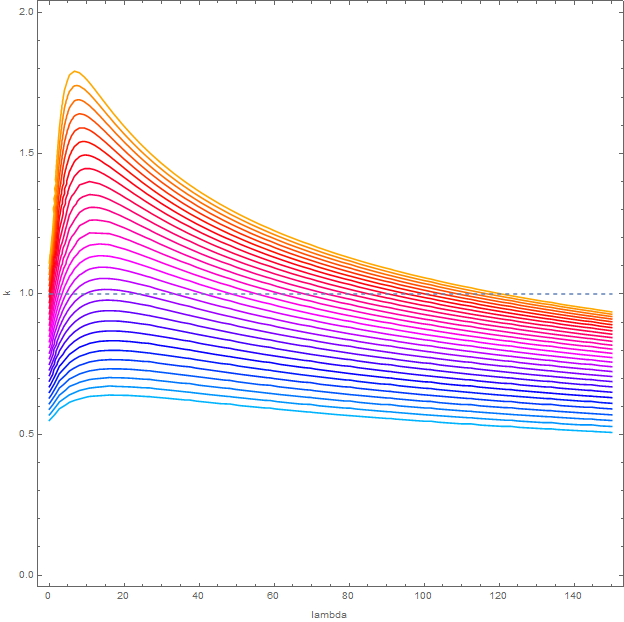}
    \caption{$k(\lambda)$ flow behaviour at $r=3.5$, with $\gamma=0.24$, $\bar{\alpha}=1$ and $\bar{k}\in[0.55,1.12]$}
    \label{three}
    \vspace{4ex}
\centering
\includegraphics[width=0.6\linewidth]{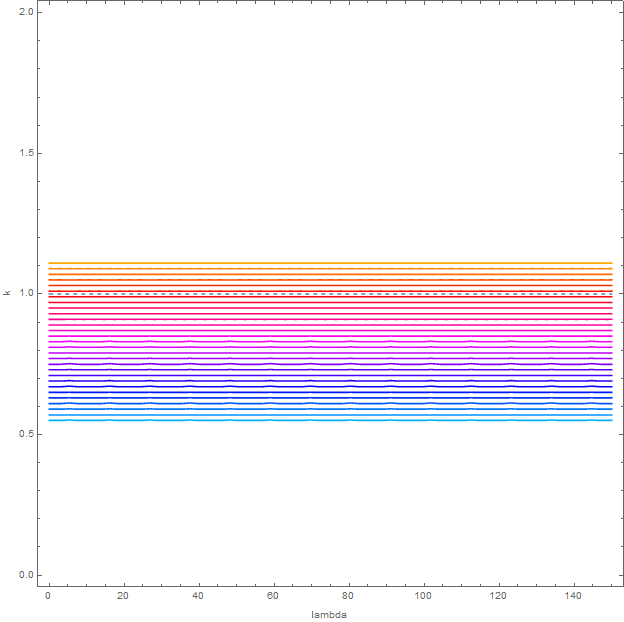}
    \caption{$k(\lambda)$ flow behaviour at $r=3.5$, with $\gamma=0.25$, $\bar{\alpha}=1$ and $\bar{k}\in[0.55,1.12]$}
    \label{four}
\end{figure}
\begin{figure}[H]
\centering
\includegraphics[width=0.6\linewidth]{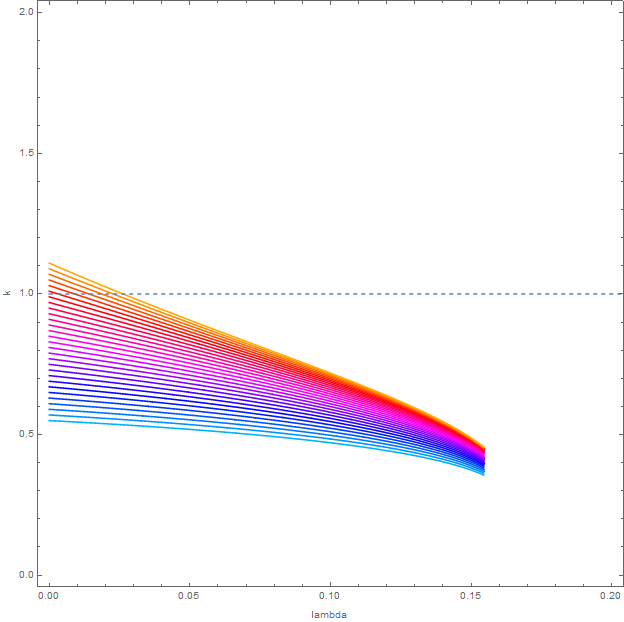}
    \caption{$k(\lambda)$ flow behaviour at $r=3.5$, with $\gamma=0.5$, $\bar{\alpha}=1$ and $\bar{k}\in[0.55,1.12]$}
    \label{five}
    \vspace{4ex}
\centering
\includegraphics[width=0.6\linewidth]{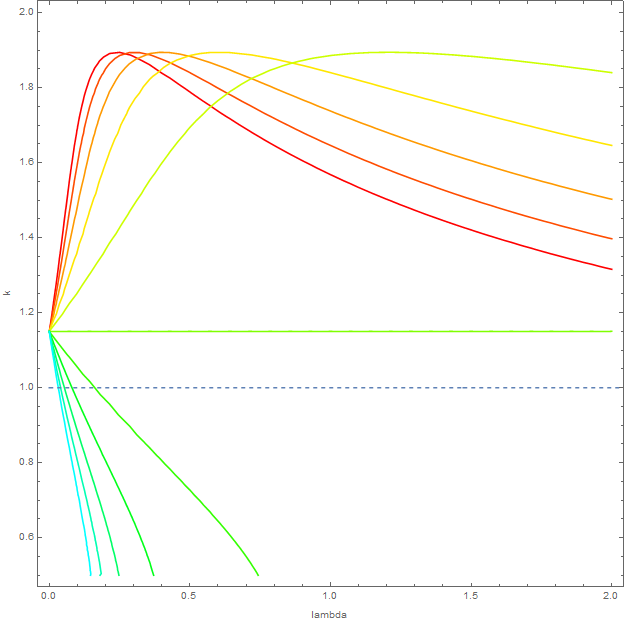}
    \caption{$k(\lambda)$ flow behaviour at $r=3.5$, with $\gamma\in[0,0.5]$, $\bar{\alpha}=1$ and $\bar{k}=1.15$}
    \label{six}
\end{figure}
Observing the above plots, we can discuss\footnote{The following statements were checked numerically for very large values of the flow para\-meter $\lambda$ and resulted to be really solid. An analytic approach might provide us with formal proofs.} some general properties of Ricci-Bourguignon flow solutions for Schwarzschild-AdS spacetime:
\begin{enumerate}
    \item Picture \ref{one}, which corresponds to the $\gamma=0$ case, shows us that \textit{Ricci flow} can induce Hawking-Page phase transition. In particular, there is a critical value $\varepsilon<0$ such that:
    \begin{itemize}
        \item For $\bar{k}<\varepsilon$, the system stays in the $k<1$ phase for the whole trajectory, never encountering the critical value;
        \item For $\varepsilon<\bar{k}<0$, the system undergoes the phase transition twice;
        \item For $\bar{k}>0$, the system undergoes the phase transition once.
    \end{itemize}
    We can observe that we \textit{always} end up with $k<1$. Namely, our spacetime flows towards the pure-AdS phase, staying in the black hole phase for \textit{at most} a finite interval in the flow parameter.
    \item Pictures \ref{two} and \ref{three} show us that small variations in $\gamma$, either towards negative or positive values, don't alter the qualitative behaviour of the flow significantly. The only effect this tuning of $\gamma$ has is shrinking or dilating the graph.
    \item Picture \ref{four} is particularly interesting, since it serves as an important \textit{consistency check} for our computations. Indeed, in $4$ dimensions, choosing $\gamma=0.25$ corresponds to removing the trace of the $C_{\mu\nu}$ tensor on the RHS of \ref{ricicib}: being our metric diagonal, Ricci-Bourguignon flow equations become trivial and can only be solved by fixed-point solutions. The fact that our numerical approximation program correctly solves this fully-understood case increases our confidence in its accuracy.
    \item For $\gamma>1/d$, $\alpha(\lambda)$ \textit{decreases} along the flow, reaching zero at a finite distance $\bar{\lambda}-\lambda_{0}$, with:
    \begin{equation}
        \alpha(\bar{\lambda})=0,\qquad\bar{\lambda}=\lambda_{0}+\frac{\alpha^{2}(\lambda_{0})}{6(\gamma d-1)} \, .
    \end{equation}
    In four dimensions, we have $1/d=0.25$. This behaviour can be clearly observed in \ref{five}, where our flow reaches a \textit{finite-distance singularity} where $\alpha^{2}=0$.
    If we, therefore, interested in testing \textit{Swampland conjectures}, it is important to take $\gamma>1/d$.
\end{enumerate}
From our numerically computed flow trajectories, we observe that the Haw\-king-Page phase transition, at least for the values of $\gamma$ taken into account\footnote{Which, indeed, cover the most significant and insightful cases.}, always happens at a finite distance in the flow parameter $\lambda$. This means that according to the swampland \textit{distance conjecture}, there  should not be an infinite tower of particles to get massless  at the transition point and the effective field theory description is still valid at the Haw\-king-Page phase transition.

\subsubsection{Yamabe flow}
Yamabe flow corresponds to a suitable rescaling of Ricci-Bourguignon flow, after having taken the limit $\gamma\rightarrow -\infty$. For Schwarzschild-AdS spacetime, we know that:
\begin{equation}
    R_{\mu\nu}=-\frac{3}{\alpha^{2}}g_{\mu\nu} \, .
\end{equation}
Hence, we are dealing with an Einstein manifold, for which:
\begin{equation}
    R_{\mu\nu}=\frac{R}{d}g_{\mu\nu} \, .
\end{equation}
Therefore, Ricci flow \textit{coincides} with Yamabe flow. Namely, the limit $\gamma\rightarrow-\infty$ produces a flow behaviour which is fully equivalent to the one discussed for the $\gamma=0$ case.

\section{Conclusions}
After having reviewed some properties of \textit{Ricci}, \textit{Yamabe} and \textit{Ricci-Bourguig\-non} geometric flow equations, a few concrete examples were discussed in detail. In particular, in the $AdS_{d}\times S^{p}$ space-time compactification case, it was shown that, in a reasonable initial parameters range, the  \textit{Yamabe} and the \textit{Ricci-Bourguignon}  flows end in a fixed point that it is at infinite distance along the 
trajectory of the flow parameters.
This behaviour, which differentiates \textit{Yamabe} flow solutions from \textit{Ricci} flow solutions, is consistent with what we would expect from the swampland \textit{distance conjecture}, but in the sense
that the corresponding tower  should not be viewed 
as states that open up a new dimension of space-time, but as states that reconstruct the flat (d+p)-dimensional space-time.


\vskip0.2cm

\noindent
In section \ref{sadschap}, some well-known thermodynamical properties of \textit{Schwarzschild-AdS} solution to Einstein field equations were revised. Subsequently, \textit{Ricci-Bourguig\-non} flow equations for Schwarzschild-AdS were written explicitly in a compact form and solved numerically for different values of $\gamma$. By doing so, we discovered that our geometric flow trajectories could induce \textit{Hawking-Page} phase transitions. On top of that, it was observed that, after a finite length in the flow parameter, the system tends to end up in the \textit{thermal AdS} phase.
In the appendices, we solved \textit{geodesic equations} for Schwarzschild-AdS metric explicitly.
We  also observed that the Haw\-king-Page phase transition always happens at a finite distance in the flow parameter, such that there  should not be an infinite tower of massless particles   at the transition point.

\newpage
\begin{appendices}
\section{Product manifolds}\label{producto}
\begin{theorem}
Let $\mathcal{M}$ and $\mathcal{N}$ be two differentiable smooth manifolds, respectively parametrized by local coordinates $\{x^{\mu}\}_{\mu=1}^{d}$ and $\{y^{a}\}_{a=1}^{p}$. Let $m_{\mu\nu}(x)$ be a metric on $\mathcal{M}$ and let $n_{ab}(y)$ be a metric on $\mathcal{N}$, with Ricci tensors $M_{\mu\nu}(x)$ and $N_{ab}(y)$.\footnote{We assume to work with metric-compatible Levi-Civita connections.} Therefore, the product manifold $\mathcal{R}\equiv\mathcal{M}\times\mathcal{N}$ equipped with the metric 
\begin{equation}
r_{MN}(x,y)\equiv\begin{bmatrix}
m_{\mu\nu}(x) & 0 \\
0 & n_{ab}(y)
\end{bmatrix}
\end{equation}
has Ricci tensor:
\begin{equation}
R_{MN}(x,y)=\begin{bmatrix}
M_{\mu\nu}(x) & 0 \\
0 & N_{ab}(y)
\end{bmatrix} \, .
\end{equation}
\end{theorem}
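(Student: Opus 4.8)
The plan is to work entirely in the natural block coordinates $\{x^\mu\}\cup\{y^a\}$ on $\mathcal{R}$ and to exploit the two structural features of the product metric $r_{MN}$: it is block-diagonal, so its inverse is block-diagonal with $r^{\mu\nu}=m^{\mu\nu}$, $r^{ab}=n^{ab}$ and $r^{\mu a}=0$; and each block depends only on the coordinates of its own factor, so that $\de_a m_{\mu\nu}=0$ and $\de_\mu n_{ab}=0$. These two facts are the entire engine of the proof.

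First I would compute all Christoffel symbols of $r_{MN}$ from $\Gamma^P_{MN}=\tfrac12 r^{PQ}(\de_M r_{QN}+\de_N r_{QM}-\de_Q r_{MN})$ and show that the only nonvanishing ones are the \emph{pure} symbols $\Gamma^\mu_{\nu\rho}$ and $\Gamma^a_{bc}$, which coincide with the Levi-Civita symbols of $m$ and $n$ respectively. Every mixed symbol vanishes: in each case either an off-diagonal factor $r_{\mu a}=0$ appears, or the only surviving term is a cross-derivative such as $\de_\sigma n_{ab}$ or $\de_c m_{\mu\nu}$, which is zero by the second structural feature.

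Next I would insert these into the Riemann tensor $R^P{}_{QMN}=\de_M\Gamma^P_{NQ}-\de_N\Gamma^P_{MQ}+\Gamma^P_{MS}\Gamma^S_{NQ}-\Gamma^P_{NS}\Gamma^S_{MQ}$. Since the surviving Christoffel symbols are pure and each depends only on its own coordinates, every derivative and every contraction over the repeated index $S$ stays within a single factor. Hence any component of $R^P{}_{QMN}$ carrying a genuinely mixed set of block indices vanishes, while the purely-$\mathcal{M}$ (resp. purely-$\mathcal{N}$) components reduce exactly to the Riemann tensor of $m$ (resp. $n$).

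Finally I would contract, $R_{MN}=R^P{}_{MPN}$, splitting the sum over $P$ into its $\mathcal{M}$-range and $\mathcal{N}$-range parts. For the $\mu\nu$ block only the $x$-index contraction survives and yields $M_{\mu\nu}(x)$; for the $ab$ block only the $y$-index contraction survives and yields $N_{ab}(y)$; the $\mu a$ blocks vanish because the relevant mixed Riemann components already vanished. This produces the claimed block-diagonal form. I do not expect a genuine obstacle here — the content is just the vanishing lemma for the mixed components — so the real care is the index bookkeeping, and in particular being scrupulous that the cross-derivatives $\de_a m_{\mu\nu}$ and $\de_\mu n_{ab}$ are invoked at every point where they are needed to kill a mixed term.
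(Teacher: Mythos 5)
Your proposal is correct and follows essentially the same route as the paper: compute the Christoffel symbols of the block-diagonal product metric, observe that the mixed ones vanish and the pure ones coincide with those of the factors, and then conclude that the Ricci tensor block-decomposes. The only difference is that you spell out the Riemann-tensor and contraction steps that the paper compresses into ``we can trivially deduce,'' which is a matter of detail rather than of method.
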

\begin{proof}
First, we compute the Christoffel's symbols $\Gamma^{M}_{NP}$ for the product manifold $\mathcal{R}$, where $\Omega^{\alpha}_{\mu\nu}$ and $\Theta^{a}_{bc}$ are, respectively, the Christoffel's symbols for $\mathcal{M}$ and $\mathcal{N}$. We list all possible cases:
\begin{itemize}
    \item $\Gamma^{\mu}_{\alpha\beta}=\frac{1}{2}r^{M\mu}\bigl(r_{M\beta,\alpha}+r_{M\alpha,\beta}-r_{\alpha\beta,M}\bigr)=\Omega^{\mu}_{\alpha\beta}$
    \item $\Gamma^{a}_{bc}=\frac{1}{2}r^{Ma}\bigl(r_{Mb,c}+r_{Mc,b}-r_{bc,M}\bigr)=\Theta^{a}_{bc}$
    \item $\Gamma^{a}_{b\mu}=\Gamma^{a}_{\nu\mu}=\Gamma^{\mu}_{b\nu}=\Gamma^{\mu}_{ab}=0$
\end{itemize}
Hence, we can trivially deduce that $R_{MN}$ satisfies the expression given above.
\end{proof}
The previous result dramatically simplifies the analysis of Ricci flow equations for a product manifold $\mathcal{R}$, since it allows us to treat the factors $\mathcal{M}$ and $\mathcal{N}$ separately.
Furthermore, we observe that $\mathcal{G}$ has \textit{Ricci scalar}
\begin{equation}
    R\equiv g^{MN}R_{MN}=M+N \, ,
\end{equation}
where $M\equiv m^{\mu\nu}M_{\mu\nu}$ is the Ricci scalar associated to $m$ on $\mathcal{M}$ and $N\equiv n^{ab}N_{ab}$ is the Ricci scalar associated to $n$ on $\mathcal{N}$.
 \section{Schwarzschild-AdS Geodesic Flow}\label{geody}
In the following section, we are going to work with the standard form
\begin{equation}
\ds^{2}=g_{\mu\nu}\dx^{\mu}\dx^{\nu}=-F(r;M,\alpha)\dt^2 + \frac{1}{F(r;M,\alpha)}\diff r^2 + r^2 \diff\Omega^2_{2}
\end{equation}
of Schwarzschild-AdS metric on a spacetime manifold $\mathcal{M}$, where:
\begin{equation}
F(r;M,\alpha)=1 - \frac{2M}{r}+ \frac{r^2}{\alpha^{2}} \, .
\end{equation}
Considering an observable $h$ from $\mathcal{M}$ into real numbers, we define its spacetime average as
\begin{equation}\label{mean}
\Mean{h}\equiv\frac{1}{V_{\mathcal{M}}}\int_{\mathcal{M}}\sqrt{g}h \, ,
\end{equation}
where $g\equiv -\det{g_{\mu\nu}}=r^{2}\sin\theta$ and:
\begin{equation}
V_{\mathcal{M}}\equiv\int_{\mathcal{M}}\sqrt{g} \, .
\end{equation}
Following the usual procedure, we move to ingoing \textit{Eddington-Finklestein} coordinates
\begin{equation}
\ds^{2}=-F(r;M,\alpha)\diff v^2 + 2\diff v\diff r + r^2 \diff\Omega^2_{2} \, ,
\end{equation}
where:
\begin{equation}
\diff v\equiv\dt +\frac{\diff r}{F(r;M,\alpha)} \, .
\end{equation}
Now, we introduce a function $\mu\equiv\mu(M,\alpha)$ and rescale $r\equiv\mu\bar{r}$ and $v\equiv\mu\bar{v}$, obtaining:
\begin{equation}
\ds^{2}=-\mu^{2}F(\mu\bar{r};M,\alpha)\diff\bar{v}^2 + 2\mu^{2}\diff\bar{v}\diff\bar{r} + \mu^{2}\bar{r}^2 \diff\Omega^2_{2} \, .
\end{equation}
In order to compute geodesic equation, we introduce an affine parameter $\lambda$, use $\dot{k}$ notation for $\lambda$-derivative of an observable $k$, promote $M$ and $\alpha$ to $\lambda$-dependent parameters and define an auxiliary matrix:
\begin{equation}
f^{\alpha}_{\ \beta}\equiv g^{\alpha\sigma}\dot{g}_{\sigma\beta} \, .
\end{equation}
Using $f$, we can write geodesic equation as:
\begin{equation}
\dot{f}^{\alpha}_{\ \beta}=\frac{1}{4}\bigl[tr(f^{2})-\Mean{tr(f^{2})}\bigr]\delta^{\alpha}_{\ \beta}-\frac{1}{2}\bigl[tr(f)-\Mean{tr(f)}\bigr]f^{\alpha}_{\ \beta} \, .
\end{equation}
Now, we compute all terms separately. For the traces, we have:
\begin{equation}
tr(f)=f^{\alpha}_{\ \alpha}=\frac{8}{\mu}\dot{\mu}=8\frac{\diff}{\diff\lambda}\log{\mu[M(\lambda),\alpha(\lambda)]} \, ,
\end{equation}
\begin{equation}
tr(f^{2})=(f^{2})^{\alpha}_{\ \alpha}=\frac{16}{\mu^{2}}\dot{\mu}^{2}=\frac{1}{4}tr(f)^{2} \, .
\end{equation}
Being the above quantities constant over the whole spacetime, we clearly have:
\begin{equation}
\Mean{tr(f)}=tr(f);\qquad\Mean{tr(f^{2})}=tr(f^{2}) \, .
\end{equation}
Therefore, geodesic equation reduces to:
\begin{equation}
\dot{f}^{\alpha}_{\ \beta}=0 \, .
\end{equation}
Considering $\alpha=\beta=0$, we have:
\begin{equation}
\frac{\diff^{2}}{\diff\lambda^{2}}\log{\mu}=0\Longrightarrow\mu(\lambda)=e^{C_{0}(\lambda-\lambda_{0})+C_{1}} \, .
\end{equation}
By fixing $\lambda_{0}=0$, $M(\lambda)\equiv M_{\lambda}$, $\alpha(\lambda)\equiv\alpha_{\lambda}$ and $\mu[M_{0},\alpha_{0}]\equiv\mu_{0}$, we obtain:
\begin{equation}
\mu[M_{\lambda},\alpha_{\lambda}]\equiv\mu_{\lambda}=\mu_{0}e^{C_{0}\lambda} \, .
\end{equation}
Considering $\alpha=0$, $\beta=1$, we have:
\begin{equation}
\frac{\diff^{2}}{\diff\lambda^{2}}\biggl[2\frac{M}{\mu}-\bar{r}^{3}\frac{\mu^{2}}{\alpha^{2}}\biggr]=0\Longrightarrow 2\frac{M}{\mu}-\bar{r}^{3}\frac{\mu^{2}}{\alpha^{2}}=e^{C_{2}\lambda+C_{3}} \, .
\end{equation}
Once more, we impose initial conditions:
\begin{equation}
2\frac{M_{\lambda}}{\mu_{\lambda}}-\bar{r}^{3}\frac{\mu_{\lambda}^{2}}{\alpha_{\lambda}^{2}}=\biggl[2\frac{M_{0}}{\mu_{0}}-\bar{r}^{3}\frac{\mu_{0}^{2}}{\alpha_{0}^{2}}\biggr]e^{C_{2}\lambda} \, .
\end{equation}
Putting the two equations together, we have:
\begin{equation}
\frac{2}{\mu_{0}}\biggl[M_{\lambda}e^{-C_{0}\lambda}-M_{0}e^{C_{2}\lambda}\biggr]-\bar{r}^{3}\mu_{0}^{2}\biggl[\frac{1}{\alpha_{\lambda}^{2}}e^{2C_{0}\lambda}-\frac{1}{\alpha_{0}^{2}}e^{C_{2}\lambda}\biggr]=0 \, .
\end{equation}
Since this has to be fulfilled for any value of $\bar{r}$, we set the two terms equal to $0$ separately. Therefore, we get:
\begin{equation}
M_{\lambda}e^{-C_{0}\lambda}-M_{0}e^{C_{2}\lambda}=0 \, ,
\end{equation}
\begin{equation}
\frac{1}{\alpha_{\lambda}^{2}}e^{2C_{0}\lambda}-\frac{1}{\alpha_{0}^{2}}e^{C_{2}\lambda}=0 \, .
\end{equation}
Such equations can be trivially solved, providing us with:
\begin{equation}
M_{\lambda}=M_{0}e^{(C_{0}+C_{2})\lambda}\equiv M_{0}e^{A\lambda} \, ,
\end{equation}
\begin{equation}
\alpha_{\lambda}=\alpha_{0}e^{(C_{0}-C_{2}/2)\lambda}\equiv\alpha_{0}e^{B\lambda} \, .
\end{equation}
Since all other components of $f$ either gave the same equation as the one obtained from $f^{0}_{\ 0}$ or a trivial identity, we have two equations for two parameters. These lead to:
\begin{equation}
\begin{split}
r_{+}(\lambda)=&\frac{\biggl(9\alpha^{2}_{0}M_{0}+\sqrt{3}\sqrt{\alpha^{6}_{0}e^{2(B-A)\lambda}+27\alpha^{4}_{0}M^{2}_{0}}\biggr)^{1/3}}{3^{2/3}}e^{(A+2B)\lambda/3}+\\
&-\frac{\alpha^{2}_{0}e^{(4B-A)\lambda/3}}{3^{1/3}\biggl(9\alpha^{2}_{0}M_{0}+\sqrt{3}\sqrt{\alpha^{6}_{0}e^{2(B-A)\lambda}+27\alpha^{4}_{0}M^{2}_{0}}\biggr)^{1/3}} \, .
\end{split}
\end{equation}
We observe that, by imposing $A=B$, we obtain the simple behaviour:
\begin{equation}
r_{+}(\lambda)=r_{+}(0)e^{B\lambda} \, .
\end{equation}
Considering, once more, the general expression for $r_{+}(\lambda)$ and imposing $\alpha_{0}=1$, $M_{0}=C$, we simplify it as:
\begin{equation}
r_{+}(\lambda;A,B,C)=\frac{\biggl(9C+\sqrt{3}\sqrt{e^{2(B-A)\lambda}+27C^{2}}\biggr)^{1/3}}{3^{2/3}e^{-(A+2B)\lambda/3}}-\frac{e^{(4B-A)\lambda/3}}{3^{1/3}\biggl(9C+\sqrt{3}\sqrt{e^{2(B-A)\lambda}+27C^{2}}\biggr)^{1/3}} \, .
\end{equation}
\begin{figure}[H] 
  \begin{subfigure}[b]{0.5\linewidth}
    \centering
        \includegraphics[width=\linewidth]{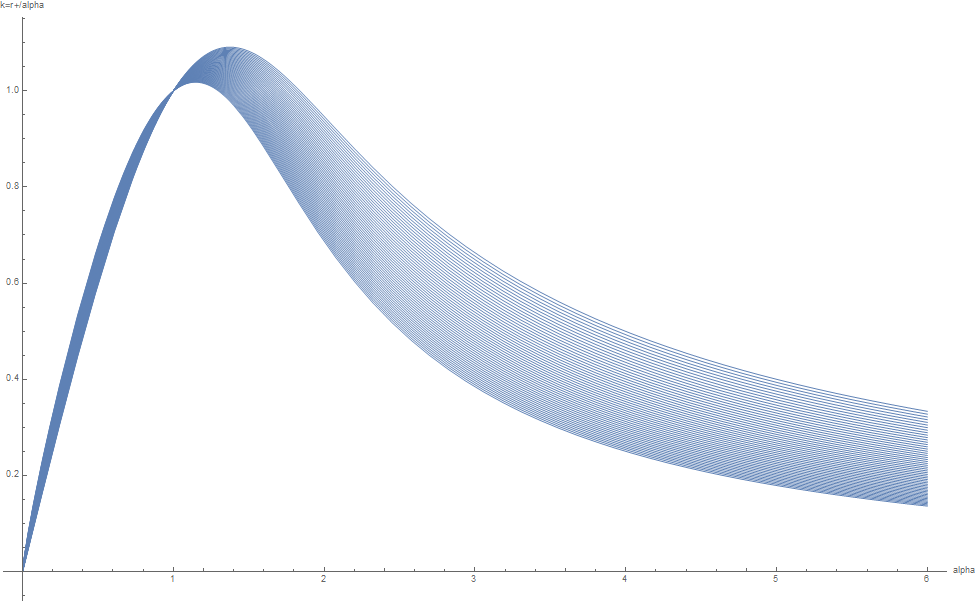}
    \caption{$C=1, B=1$ and $A\in[-2,5,-2]$}
    \label{fig7:a} 
    \vspace{4ex}
  \end{subfigure}
  \begin{subfigure}[b]{0.5\linewidth}
    \centering
    \includegraphics[width=\linewidth]{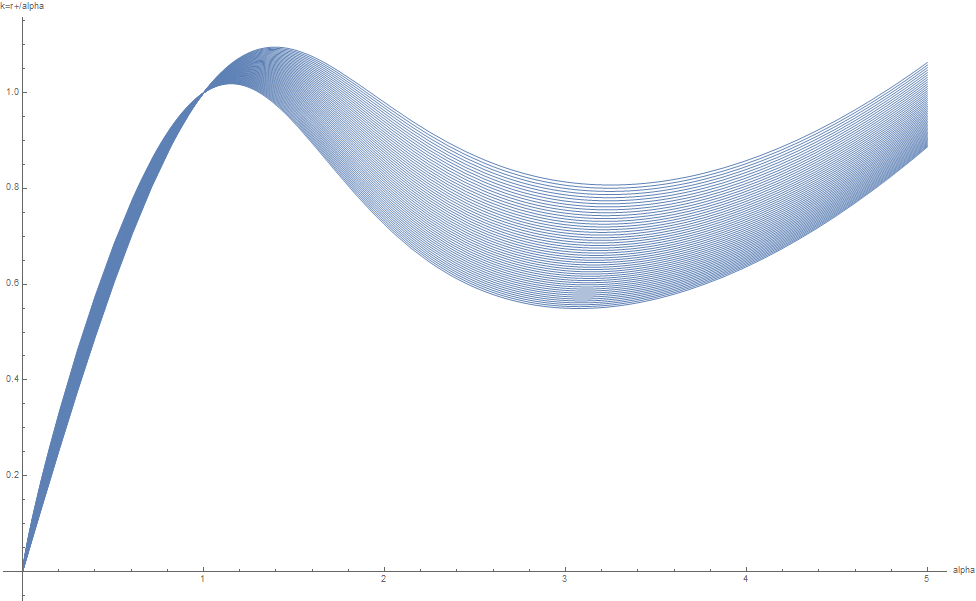}
    \caption{$C=1, B=0.995$ and $A\in[-2,5,-2]$}
    \label{fig7:b} 
    \vspace{4ex}
  \end{subfigure} 
  \begin{subfigure}[b]{0.5\linewidth}
    \centering
    \includegraphics[width=\linewidth]{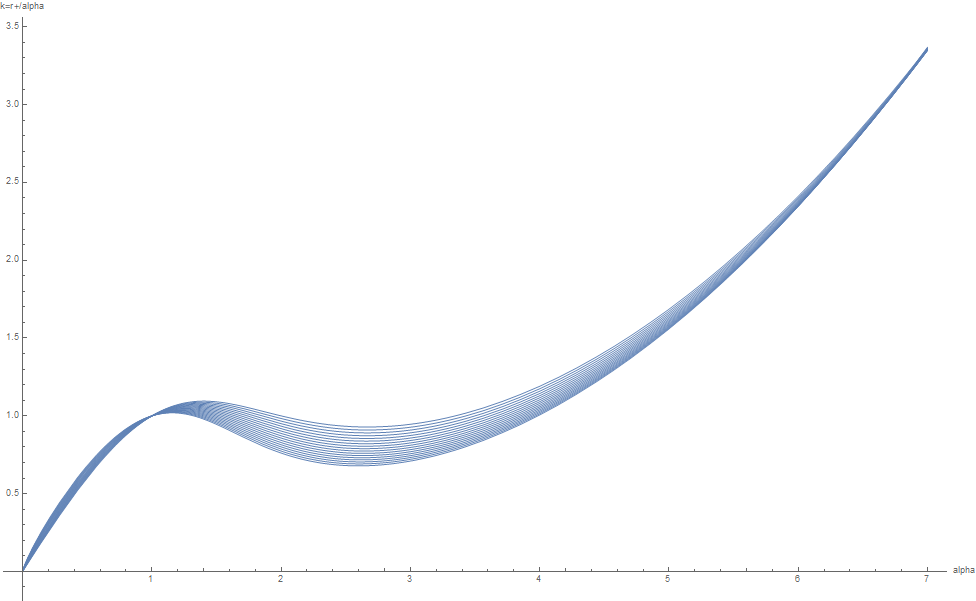}
    \caption{$C=1, B=0.99$ and $A\in[-2,5,-2]$}
    \label{fig7:c} 
  \end{subfigure}
  \begin{subfigure}[b]{0.5\linewidth}
    \centering
    \includegraphics[width=\linewidth]{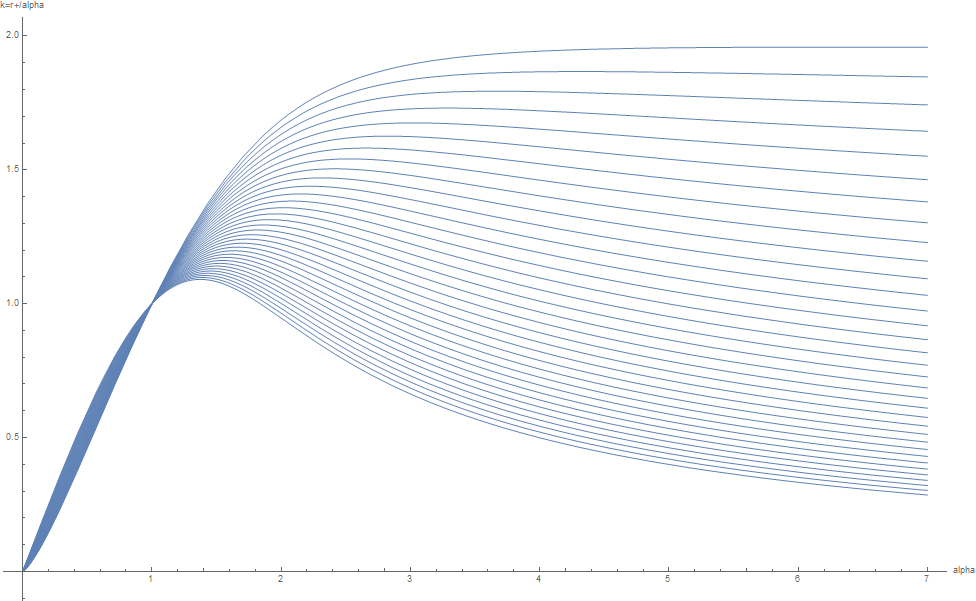}
    \caption{$C=1, B=1$ and $A\in[-2,-1]$}
    \label{fig7:d} 
  \end{subfigure} 
  \begin{subfigure}[b]{0.5\linewidth}
    \centering
        \includegraphics[width=\linewidth]{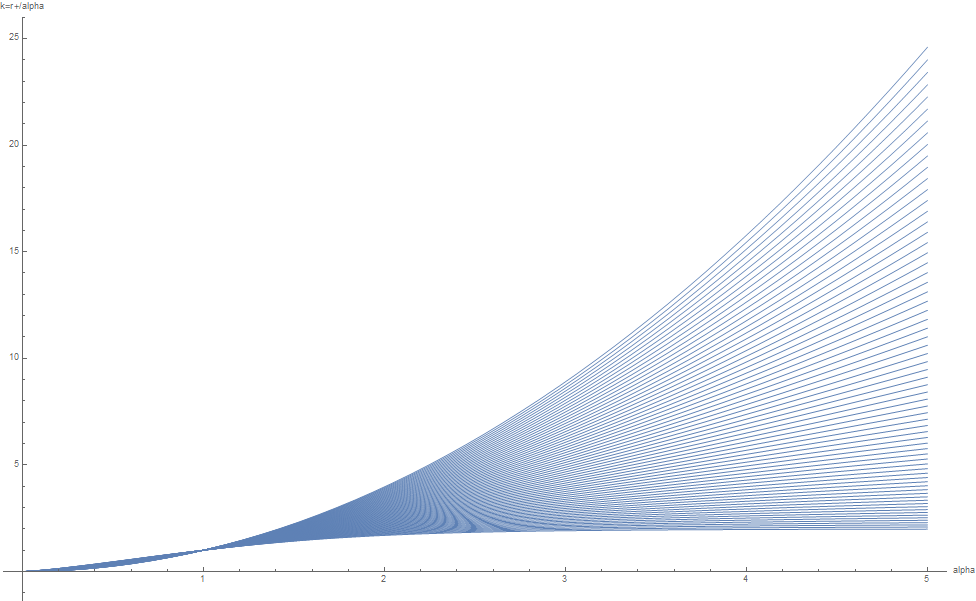}
    \caption{$C=1, B=1$ and $A\in[-1,1]$}
    \label{fig7:a} 
    \vspace{4ex}
  \end{subfigure}
  \begin{subfigure}[b]{0.5\linewidth}
    \centering
    \includegraphics[width=\linewidth]{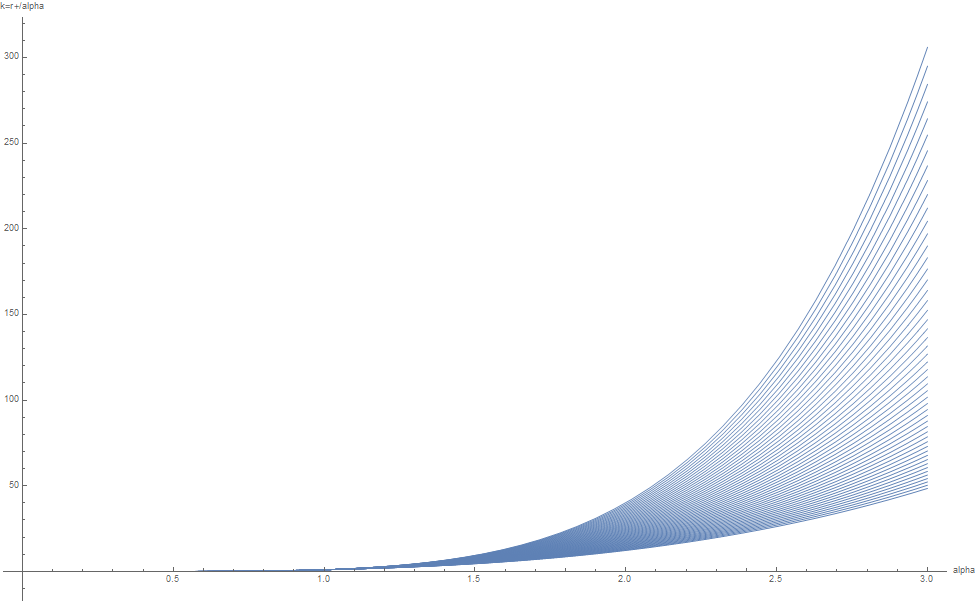}
    \caption{$C=1, B=1$ and $A\in[5,10]$}
    \label{fig7:b} 
    \vspace{4ex}
  \end{subfigure} 
  \begin{subfigure}[b]{0.5\linewidth}
    \centering
    \includegraphics[width=\linewidth]{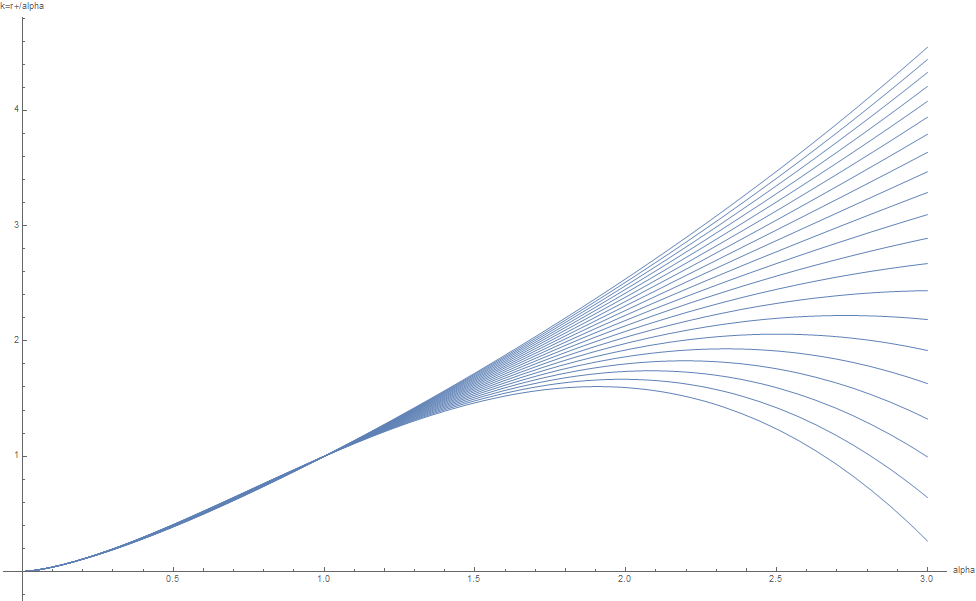}
    \caption{$C=1, A=-0.5$ and $B\in[0.9,1.1]$}
    \label{fig7:c} 
  \end{subfigure}
  \begin{subfigure}[b]{0.5\linewidth}
    \centering
    \includegraphics[width=\linewidth]{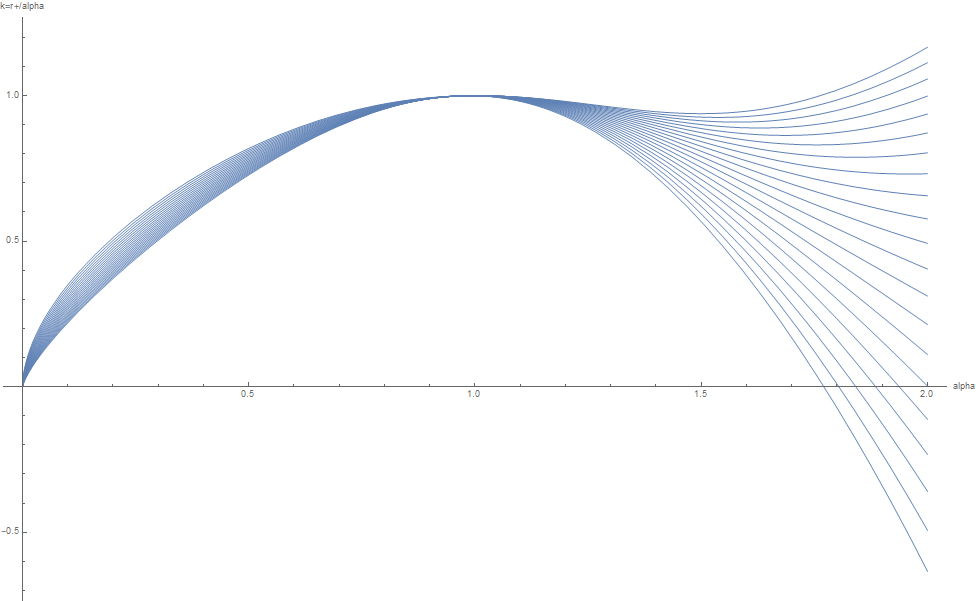}
    \caption{$C=1, A=-3$ and $B\in[0.9,1.1]$}
    \label{fig7:d} 
  \end{subfigure} 
\end{figure}
\newpage
\newpage
\end{appendices}

\bibliographystyle{plain}
\bibliography{bibliogra.bib}
\end{document}